\newtheorem{theorem}{Theorem}
\newtheorem{lemma}[theorem]{Lemma}
\newtheorem{proposition}[theorem]{Proposition}
\theoremstyle{definition}
\newtheorem{definition}[theorem]{Definition}
\newtheorem{example}[theorem]{Example}
\newcommand{\examend}{\hfill \rule[.5ex]{3em}{.15ex} $\diamond$}
\begin{document}

\title{Assignment games with population monotonic allocation schemes\thanks{
The author acknowledges support from the Hungarian National Research, Development and Innovation Office via the grant NKFI K-119930, and thanks participants of the SAET 2021 conference for their comments.}}
\author{Tam\'{a}s Solymosi \thanks{Corvinus Center for Operations Research and Department of Operations Research and Actuarial Sciences, Corvinus University of Budapest. E-mail: \href{mailto:tamas.solymosi@uni-corvinus.hu}{tamas.solymosi@uni-corvinus.hu}}}

\date{\today}
\maketitle
\begin{abstract}
We characterize the assignment games which admit a population monotonic allocation scheme (PMAS) in terms of efficiently verifiable structural properties of the nonnegative matrix that induces the game.
We prove that an assignment game is PMAS-admissible if and only if the positive elements of the underlying nonnegative  matrix form orthogonal submatrices of three special types.
In game theoretic terms it means that an assignment game is PMAS-admissible if and only if it contains a veto player or a dominant veto mixed pair or is composed of from these two types of special assignment games.
We also show that in a PMAS-admissible assignment game all core allocations can be extended to a PMAS, and the nucleolus coincides with the tau-value. 
\vspace{0.15cm}\\
\noindent \textbf{Keywords:} Assignment game $\cdot$ population monotonic allocation scheme $\cdot$ nucleolus $\cdot$ tau-value  \vspace{0.15cm}\\
\noindent \textbf{JEL Classification:} C71 $\cdot$ C78\vspace{0.15cm}\\
\noindent \textbf{Mathematics Subject Classification (2010):} 91A12 $\cdot$ 91A43
\end{abstract}

\section{Introduction}\label{intro}

In their seminal paper, \cite{ShapleyShubik1972} introduced assignment games to study two-sided matching markets where there are indivisible goods which are traded between sellers and buyers in exchange for money. They proved the non-emptiness of the core and provided a description of the core in terms of the underlying nonnegative pairwise surplus matrix that induces the game via maximal assignment optimization. Thus, the core of an assignment game can be determined without the explicit computation of the (in the number of players exponentially many) coalitional values.
\cite{SolymosiRaghavan2001} presented efficiently verifiable properties of the matrix which characterize various properties of the induced assignment game, for example, stability of the core, largeness of the core, exactness of the game, convexity of the game.

In this paper, following a similar approach, we characterize the assignment games which admit a population monotonic allocation scheme, a concept for core refinement introduced by \cite{Sprumont1990}.
An allocation scheme for a cooperative game specifies how to allocate the worth of every coalition. It is population monotonic if each player's payoff increases as the coalition to which the player belongs to grows larger. \cite{Sprumont1990} showed that, essentially, a game has a population monotonic allocation scheme (PMAS) if and only if it is a positive linear combination of monotonic (simple) games with (individual) veto control. We will see that veto control not only by a single player but also by a seller-buyer pair are the key features of PMAS-admissibility for assignment games.

An obvious necessary condition for a game to admit a PMAS is that the core of each of its subgames is nonempty, that is the game is totally balanced. 
\cite{Sprumont1990} observed that in an (individually) veto controlled monotonic game, we get a PMAS if in each subgame the (or in case of several veto players, a fixed) veto player receives the entire worth and all other players receive nothing. This easy-to-implement rule, however, only leads to a single (extreme) core allocation. We might also be interested which other core allocations can be extended to a full scheme of population monotonic subgame core allocations.
Another sufficient conditions for the existence of a PMAS is convexity of the game. \cite{Sprumont1990} showed that in a convex game all core allocations can be reached by a PMAS. In particular, the Shapley value allocation of the game can be extended to a PMAS if we take the Shapley value allocation in each subgame. 
The PMAS-extendability of all core allocations in convex games seems natural, since convex games exhibit "increasing returns to cooperation" when any player is joining any disjoint coalition.

In this respect, assignment games are quite different from convex games, players on the same side of the market are substitutes rather than complements. No wonder, the class of PMAS-admissible assignment games is ``quite slim'', although, as we will see, larger than the ``very slim'' class of convex assignment games. As \cite{Sprumont1990} demonstrated, if the pairwise surplus matrix contains a $2\times 2$ positive submatrix, no PMAS is possible in the associated assignment game. 
Our goal is to fully explore the structure of the underlying matrix which allows a PMAS in the assignment game. We give our structural characterization in terms of efficiently verifiable matrix properties as well as in terms of decomposibility in ``elementary'' assignment games with veto control (that provides PMAS-admissibility). 
We also show that in an assignment game either none or all core allocations are PMAS-extendable.
Finally, we prove that in PMAS-admissible assignment games, the nucleolus (\cite{Schmeidler1969}) and the tau-value (\cite{Tijs1981}) coincide.

There is a rich literature on subclasses of totally balanced games in which certain extra conditions are sufficient (and maybe also necessary) for admitting a population monotonic allocation scheme.
Without even attempting comprehensiveness, we only mention a few well-known classes of games, for example, minimum cost spanning tree games \citep{NordeMorettiTijs2002}, bankruptcy games \citep{GrahnVoorneveld2002}, simple games \citep{CiftciBormHamers2010}, vertex cover games \citep{XiaoFangDu2020}, weighted multi-sided glove games \citep{MorettiNorde2021}, and independent set games \citep{WangXiaoDuXu2022}.
The paper \citep{AbeLiu2019} discusses monotonic allocation paths in assignment games, a topic cloesely related to population monotonic allocation schemes.
\cite{ElObadiMiquel2019} consider an ``almost'' two-sided variation of assignment games, where there is a special player, called the central player who is a seller and a buyer at the same time, and has veto control.  

Particularly closely related to our work is the recent paper by \cite{XiaoFang2022}. The authors characterize PMAS-admissible matching games defined on edge-weighted undirected networks. Since assignment games are of this type of matching games defined on two-sided networks, the main characterization of \cite{XiaoFang2022} naturally applies to assignment games too. Our paper, developed independently, focuses only on assignment games, but 
also establishes that in assignment games either none or all core allocations are PMAS-extendable, and that in the latter case, the nucleolus and the tau-value coincide.

The rest of the paper is organized as follows.
In the next section, we recall the necessary background and establish two important features of population monotonic allocation schemes, namely that they are independent of inessential coalitions and are decomposable.
In section 3, assignment games with two types of veto control are investigated, namely, when there is a veto player, or when there is a seller-buyer pair with veto control. 
It is proved in section 4 that any PMAS-admissible assignment game is the composition of some of these two ``elementary''  types of assignment games, and that in any PMAS-admissible assignment game all core allocations are PMAS-extendable.  
In section 5, the equivalence of the nucleolus and the tau-value on the class of PMAS-admissible assignment games is proved, and it is illustrated by examples that this coincidence may not hold if either PMAS-admissibility or the assignment nature of the game is lifted. 
Section 6 concludes.

\section{Preliminaries}\label{prelims}

A transferable utility cooperative game on the nonempty finite {\em set of players}, $N$, is defined by a {\em coalitional function} $w:2^N \rightarrow \mathbb{R}$ satisfying $w(\emptyset)=0$.
The function $w$ specifies the worth of every {\em coalition} $S\subseteq N$.
We denote the family of nonempty subsets of players by $\mathcal{P}(N)=2^N\setminus \{\emptyset\}$.
Coalition $S\in\mathcal{P}(N)$ is called \emph{inessential} in game $w$ if $w(S)\leq w(S_1)+\ldots +w(S_k)$ for some of its nontrivial partition $S=S_1\cup\ldots\cup S_k$ consisting of $k\geq 2$ pairwise disjoint nonempty coalitions $S_1,\ldots,S_k\in\mathcal{P}(N)$. Coalitions which are not inessential in $w$ are called \emph{essential} in $w$. We denote the set of essential coalitions in game $w$ by $\mathcal{E}(w)$. Notice that every singleton coalition $\{i\}$, $i\in N$, is essential in any game $(N,w)$.

Given a game $(N,w)$, a {\em payoff allocation} $x\in\mathbb{R}^N$ is called
{\em efficient}, if $x(N)=w(N)$; and
{\em coalitionally rational}, if $x(S) \geq w(S)$ for all $S \subseteq N$;
where, by the standard notation, $x(S)=\sum_{i\in S} x_i$ if $S\neq\emptyset$, and $x(\emptyset)=0$.
We denote by $\mathbf{C}(N,w)$ the {\em core}, that is the set of efficient and coalitionally rational payoffs, of the game $(N,w)$.
Observe that the core is always the same as the \emph{essential-core} where coalitional rationality is required only for the essential coalitions in the game, i.e $\mathbf{C}(N,w)=\mathbf{C}(N,w,\mathcal{E}(w)):=\{x\in\mathbb{R}^N : x(N)=w(N),\; x(S)\geq w(S) \;\forall S \in\mathcal{E}(w)\}$. 
A game is called {\em balanced}, if its core is not empty, and {\em totally balanced}, if all subgames of the game are balanced. 
Convex games and assignment games form two (quite distinct, although not disjoint) subclasses of totally balanced games  such that also all their subgames belong to the same subclass.
Game $(N,v)$ is \emph{convex} if $v(S\cup\{i\})-v(S)\leq v(T\cup\{i\})-v(T)$ holds for all $i\in N$ and $S\subseteq T\subseteq N\setminus\{i\}$.   
Assignment games will be defined in the next section.

The \emph{composition} of games $(N_1,w_1),\ldots,(N_k,w_k)$ with pairwise disjoint player sets is the game $(N,w)$ with player set $N:=N_1\cup\ldots\cup N_k$ and coalitional function $w(S):=w_1(S\cap N_1)+\ldots +w_k(S\cap N_k)$ for all $S\subseteq N$. 
If all games $(N_1,w_1),\ldots,(N_k,w_k)$ are balanced then also the composite game $(N,w)$ is balanced, moreover, 
$x\in\mathbf{C}(N,w)$ if and only if $x=(x^1,\ldots,x^k)$ for some $x^1\in\mathbf{C}(N_1,w_1),\ldots,x^k\in\mathbf{C}(N_k,w_k)$.
A game is said to be \emph{decomposable}, if its player set can be partitioned in at least two nonempty coalitions such that the game is the composition of the corresponding subgames. 
Notice that a coalition which contains players from at least two of the decomposing subgames is inessential in the game.  

\cite{Sprumont1990} introduced the concept of a \textit{population monotonic allocation scheme (PMAS)} as a refinement of core allocations.
\begin{definition}
\label{def:PMAS}
A vector $[x^S_i:S\in\mathcal{P}(N),i\in S]$ is a population monotonic allocation scheme (PMAS) for game $(N,w)$, if the following conditions hold:  %(matrix with missing entries)
\begin{enumerate}
\item for all $S\in\mathcal{P}(N)$: $\;\sum_{i\in S} x^S_i=w(S)$; \hfill (subgame efficiency)
\item for all $S,T\in\mathcal{P}(N)$ and $i\in S$: $\;S\subseteq T\Rightarrow x^S_i \leq x^T_i$. \hfill (payoff monotonicity)
\end{enumerate}
The subvector $x^S:=[x^S_i:i\in S]$ is called the \emph{suballocation} in PMAS $[x^S_i:S\in\mathcal{P}(N),i\in S]$ corresponding to coalition $S\in\mathcal{P}(N)$.
A payoff vector $x\in\mathbb{R}^N$ in game $(N,w)$ is called \emph{PMAS-extendable} if there exists a PMAS $[x^S_i:S\in\mathcal{P}(N),i\in S]$ for $w$ such that $x=x^N$.
A game $(N,w)$ is called \emph{PMAS-admissible} if it admits a PMAS, that is, if at least one payoff vector is PMAS-extendable in $w$.
\end{definition}

It is clear that only totally balanced games can admit a PMAS, and only core allocations can be PMAS-extendable.
\cite{Sprumont1990} showed that every 3-player superadditive and balanced (hence totally balanced) game is PMAS-admissible, moreover, in these games all core allocations are PMAS-extendable.
Another sufficient property for PMAS-admissibility is convexity of the game. In fact, \cite{Sprumont1990} showed that also in convex games all core allocations are PMAS-extendable.
Thirdly, he observed that in a monotonic veto-controlled game, if we give to a fixed veto player the entire worth of every coalition to which he belongs to and zero to the other coalition members, we get a PMAS. However, in contrast to 3-player totally balanced games and convex games, in a monotonic veto-controlled game the set of PMAS-extendable allocations might be a strict subset of the core, as illustrated in Example~\ref{exam:4player-veto} below. 

On the negative side, existence of a PMAS is not guaranteed for all totally balanced games with at least 4 players.
\cite{Sprumont1990} used assignment games to prove this. 
He showed that ``\emph{every assignment game with at least two sellers and two buyers, and such that each seller-buyer pair derives a positive gain from trade, lacks a PMAS.}''

For reference, we summarize the aformentioned basic results.
\begin{theorem}[\citealp{Sprumont1990}]
\label{theo:Sprumont1990}
\begin{enumerate}
\item A game admits a PMAS if and only if its 0-normalization is the sum of monotonic games with at least one veto player.
\item In any 3-player totally balanced game all core allocations are PMAS-extendable.
\item In any convex game all core allocations are PMAS-extendable.
\item An assignment game admits no PMAS if it is induced by a pairwise surplus matrix which contains a $2\times 2$ positive submatrix.
\end{enumerate}
\end{theorem}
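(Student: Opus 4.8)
The statement collects four results of \cite{Sprumont1990}. Parts 1--3 are general facts about arbitrary TU games, and I would simply attribute them to their source: the characterization in Part 1 via sums of veto-controlled monotonic games is the engine behind everything, while Parts 2--3 are the two classical sufficiency results (three players, and convexity). The only part specific to assignment games, and the one motivating the structural analysis of this paper, is Part 4, so that is where I would place a self-contained argument.

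The first move is a reduction. Since restricting a PMAS of $(N,w)$ to the subcoalitions of any fixed $T\subseteq N$ again satisfies subgame efficiency and payoff monotonicity, PMAS-admissibility is inherited by subgames. Hence it suffices to treat the four-player assignment game induced by a single $2\times 2$ all-positive submatrix: sellers $s_1,s_2$, buyers $b_1,b_2$, and surpluses $a_{kl}>0$ for the mixed pair $\{s_k,b_l\}$. In this game singletons and one-sided pairs have value $0$, each mixed pair $\{s_k,b_l\}$ has value $a_{kl}$, a three-player coalition with the single buyer $b_l$ has value $\max_k a_{kl}$, and one with the single seller $s_k$ has value $\max_l a_{kl}$.

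Assuming, for contradiction, a PMAS $[x^S_i]$, I would record consequences at two coalition sizes. Write $\alpha_{kl}:=x^{\{s_k,b_l\}}_{s_k}$ for the seller's share in the mixed pair; subgame efficiency gives the buyer's share $a_{kl}-\alpha_{kl}$, and monotonicity over the singletons (worth $0$) forces $0\le\alpha_{kl}\le a_{kl}$. Now I combine subgame efficiency with payoff monotonicity on the three-player coalitions. For $\{s_1,s_2,b_1\}$ the three players' payoffs dominate $\alpha_{11}$, $\alpha_{21}$, and $\max(a_{11}-\alpha_{11},\,a_{21}-\alpha_{21})$, so summing and simplifying gives $\max(a_{11}+\alpha_{21},\,a_{21}+\alpha_{11})\le\max(a_{11},a_{21})$; since shares are nonnegative this forces $\alpha_{21}=0$ when $a_{11}\ge a_{21}$ and $\alpha_{11}=0$ when $a_{21}\ge a_{11}$. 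For $\{s_1,b_1,b_2\}$ and $\{s_2,b_1,b_2\}$ the analogous computation yields $\min(\alpha_{11},\alpha_{12})\ge\min(a_{11},a_{12})>0$ and $\min(\alpha_{21},\alpha_{22})\ge\min(a_{21},a_{22})>0$, so every $\alpha_{kl}$ is strictly positive. These conclusions are incompatible: the two-seller coalition forces one of $\alpha_{11},\alpha_{21}$ to vanish, while the two one-seller coalitions force all four $\alpha_{kl}>0$. This contradiction establishes Part 4.

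The delicate point -- and the step I would be most careful with -- is that the mixed-pair inequalities alone do not suffice: monotonicity to the grand coalition only gives $x^N_{s_k}+x^N_{b_l}\ge a_{kl}$, which merely places the four-player suballocation in the (nonempty) core. The contradiction surfaces one level down, at the three-player coalitions, and it relies essentially on singletons being worth $0$ so that the seller shares are nonnegative. I would therefore present the three-player inequalities explicitly and verify the ``$\max$''/``$\min$'' simplifications, since the whole argument turns on the interplay between the upward push coming from the one-seller/one-buyer coalitions and the vanishing forced by a two-seller (equivalently, two-buyer) coalition.
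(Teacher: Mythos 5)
Your proposal is correct. Note, though, that the paper does not prove this theorem at all: it is stated purely as a summary of results from \citet{Sprumont1990}, and the paper later invokes Part 4 as a black box (e.g.\ to dispose of the ``Type C'' $2\times 2$ all-positive matrices in Section 3.2). So where the paper's ``proof'' is a citation, you supply a self-contained argument for the one assignment-specific part. Your argument is sound: the reduction to the four-player subgame is legitimate because a PMAS restricts to a PMAS of any subgame and the subgame of an assignment game on a player subset is the assignment game of the corresponding submatrix; the inequality $\max(a_{11}+\alpha_{21},\,a_{21}+\alpha_{11})\le\max(a_{11},a_{21})$ from the two-seller coalition correctly forces $\min(\alpha_{11},\alpha_{21})=0$, while the one-seller coalitions force $\min(\alpha_{k1},\alpha_{k2})\ge\min(a_{k1},a_{k2})>0$ for each $k$, and these are incompatible. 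It is worth observing that your technique --- pinning down three-player suballocations via monotonicity from the mixed pairs and then contradicting subgame efficiency --- is exactly the method the paper itself uses in Case 2 of Proposition~\ref{prop:assign-2x2} for the $\Gamma$-shaped $2\times 2$ matrix with non-dominant corner; your write-up of the all-positive case would slot naturally alongside that proof. Your closing caveat is also well taken: monotonicity up to the grand coalition alone only recovers core membership of $x^N$, so the contradiction genuinely must be extracted at the three-player level.
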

We remark that \cite{Sprumont1990} also described the finitely generated cone of PMAS-admissible games in terms of linear inequalities reminescent of the well-known (Bondareva and Shapley) balancedness conditions for nonemptiness of the core. His dual characterization was substantially strengthened by \cite{NordeReijnierse2002}.
Although streamlining their characterization to assignment games could give further valuable insights to the combinatorial nature of these two-sided matching games, we do not explore this dual approach here. 

We start our discussion with two general observations on population monotonic allocation schemes.
They highlight the inherent structural similarities between PMAS-es and the core.
First, inessential coalitions are redundant for a PMAS, the suballocations corresponding to the essential coalitions determine the entire PMAS.
Secondly, the composition of PMAS-es gives a PMAS for the composite game. 
Since we will frequently employ (but have not found references to) these basic properties of PMAS-es, we also present the proofs.   
\begin{proposition}
\label{prop:inessential-pmas}
\begin{enumerate}
\item Let $[x^S_i:S\in\mathcal{P}(N),i\in S]$ be a PMAS in game $(N,w)$.
If coalition $S\in\mathcal{P}(N)$ is inessential in $w$ because $w(S)\leq w(S_1)+\ldots +w(S_k)$ holds for partition $S=S_1\cup\ldots\cup S_k$ with $k\geq 2$, then $x^{S} = (x^{S_1},\ldots, x^{S_k})$. 
\item If $(N,w)$ is the composition of PMAS-admissible games $(N_1,w_1),\ldots, (N_k,w_k)$  with $k\geq 2$,
then any composition of PMAS-es, one for each game $(N_1,w_1),\ldots, (N_k,w_k)$, gives a PMAS for $(N,w)$, thus, $(N,w)$ is also PMAS-admissible.
\end{enumerate} 
\end{proposition}
\begin{proof}
We prove both statements only for $k=2$, the proofs can be completed by straightforward inductive arguments.

To see Claim 1., let $[x^S_i:S\in\mathcal{P}(N),i\in S]$ be a PMAS in game $(N,w)$, and let $w(S_1\cup S_2)\leq w(S_1)+w(S_2)$ for disjoint nonempty coalitions $S_1,S_2\subseteq N$.
Since $w$ is totally balanced, hence superadditive, we also have $w(S_1\cup S_2)\geq w(S_1)+w(S_2)$, implying $w(S_1\cup S_2)= w(S_1)+w(S_2)$. 
By monotonicity of the individual payoffs, we get $x_{i}^{S_1}\leq x_{i}^{S_1\cup S_2}$ for all $i\in S_1$ and $x_{j}^{S_2}\leq x_{j}^{S_1\cup S_2}$ for all $j\in S_2$. 
Summation of these inequalities and subgame efficiency of suballocations give $w(S_1\cup S_2)= w(S_1)+w(S_2) = x^{S_1}(S_1)+x^{S_2}(S_2)\leq x^{S_1\cup S_2}(S_1\cup S_2)=w(S_1\cup S_2)$.
It follows that all inequalities hold as equations, that is $x_{i}^{S_1}= x_{i}^{S_1\cup S_2}$ for all $i\in S_1$ and $x_{j}^{S_2}= x_{j}^{S_1\cup S_2}$ for all $j\in S_2$. 
Therefore, the $(S_1\cup S_2)$-suballocation $x^{S_1\cup S_2}$ is composed of the $S_1$-suballocation $x^{S_1}$ and the $S_2$-suballocation $x^{S_2}$, that is $x^{S_1\cup S_2}=(x^{S_1},x^{S_2})$.

To see Claim 2., let $(N,w)$ be the composition of PMAS-admissible games $(N_1,w_1)$ and $(N_2,w_2)$. Then any coalition $S\subseteq N$ such that $S\cap N_1\neq\emptyset$ and $S\cap N_2\neq\emptyset$ is inessential in $w$, hence a corresponding suballocation can be obtained as the Cartesian product of an $(S\cap N_1)$-suballocation and an $(S\cap N_2)$-suballocation. It is easily checked that such composition of any PMAS for $(N_1,w_1)$ and any PMAS for $(N_2,w_2)$ gives a PMAS for $(N,w)$.
\end{proof}

The following veto-controlled game demonstrates that not necessarily all core allocations are PMAS-extendable, even if some are.
\begin{example}
\label{exam:4player-veto}
Consider the game with player set $N=\{1,2,3,4\}$ and coalitional function $v(12)=2$, $v(13)=v(123)=3$, $v(14)=v(124)=4$, $v(134)=5$, $v(N)=8$, and $v(T)=0$ for any other coalition $T\subseteq N$.
Since player 1 is a veto player, the game is PMAS-admissible, for example, the core allocation $(8,0,0,0)$ is PMAS-extendable. 
In contrast, allocation $y=(1,2,2,3)$ is in the (essential-)core but it is not PMAS-extendable.
Indeed, only the following simplified scheme (containing only the multi-player essential coalitions, notice that $v(123)=v(2)+v(13)$ and $v(124)=v(2)+v(14)$, so they are also inessential) could extend core allocation $y=(1,2,2,3)$ to a PMAS,
$$
\setlength{\extrarowheight}{1pt}
\begin{array}{|cccc||c|c|cccc|}
\multicolumn{4}{|c||}{S} & v(S) & 0\leq & p^S_1 & p^S_2 & p^S_3 & p^S_4 \\
\hline
\hline
%1 & . & . & . & 0 & = & 0   & .   & . & . \\
%. & 2 & . & . & 0 & = & .   & 0   & . & . \\
%. & . & 3 & . & 0 & = & .   & .   & 0 & . \\
%. & . & . & 4 & 0 & = & .   & .   & . & 0 \\
%\hline
1 & 2 & . & . & 2 & = & p^{12}_1 & p^{12}_2 & . & . \\
1 & . & 3 & . & 3 & = & 1   & .   & 2 & . \\
1 & . & . & 4 & 4 & = & 1   & .   & . & 3 \\
1 & . & 3 & 4 & 5 & = & p^{134}_1 & .   & p^{134}_3 & p^{134}_4 \\
\hline
1 & 2 & 3 & 4 & 8 & = & 1 & 2 & 2 & 3 \\
\hline
\end{array}
$$
because the suballocations corresponding to coalitions $\{1,3\}$ and $\{1,4\}$ (which are tight at the given core payoff vector) could only be efficient if $p^{13}_1=1=y_1$, $p^{13}_3=2=y_3$, and $p^{14}_1=1=y_1$, $p^{14}_4=3=y_4$.
However, by monotonicity of individual payoffs in a PMAS, $p^{134}_1\geq p^{13}_1=1$, $p^{134}_3\geq p^{13}_3=2$, and $p^{134}_4\geq p^{14}_4=3$ must hold, contradicting the subgame efficiency requirement $p^{134}_1+p^{134}_3+p^{134}_4=5$ for coalition $\{1,3,4\}$.
\examend
\end{example}

We have two main goals in this paper: (i) characterize which assignment games %(with at least two sellers and two buyers) 
admit a PMAS, and (ii) answer the question whether all core allocations are PMAS-extendable in PMAS-admissible assignment games.
It follows from \citep{Sprumont1990} (cf. Theorem~\ref{theo:Sprumont1990}) that assignment games with a single seller (or with a single buyer) are PMAS-admissible, because in these (zero-normalized) monotonic games the single seller (or the single buyer) is a veto player.
Secondly, as observed by \cite{SolymosiRaghavan2001}, precisely those assignment games are convex games which can be decomposed in their subgames corresponding to the positive-valued mixed-pair (composed of one seller and one buyer) coalitions.
We will also identify those PMAS-admissible assignment games which are not convex and contain no veto player. 
It will turn out that decomposibility in subgames where a single-player or a mixed-pair has veto power is the key structural feature of PMAS-admissible assignment games.
We will also show that in an assignment game either none or all core allocations are PMAS-extendable.

\section{Assignment games}\label{assignmentgames}

Given two disjoint finite sets $S$ and $T$, we call $\mu\subseteq S\times T$ an {\em $(S,T)$-assignment}, if it is a bijection from some $S'\subseteq S$ to some $T'\subseteq T$ such that $\vert S'\vert=\vert T'\vert=\min(\vert S\vert,\vert T\vert)$.
Trivially, $\mu=\emptyset$, if $S=\emptyset$ or $T=\emptyset$.
We shall write $(i,j)\in\mu$ as well as $\mu(i)=j$ and $\mu(j)=i$.
We denote by $\mathcal{M}(S,T)$ the set of all $(S,T)$-assignments.
Obviously, $\mathcal{M}(S,T)=\{\emptyset\}$, if $S=\emptyset$ or $T=\emptyset$.

A game $(N,w)$ is called an {\em assignment game}, if there exists a partition $N=I\cup J$, $I\cap J=\emptyset$, of the player set and a nonnegative matrix $A=[a_{ij}]_{i\in I, j\in J}$ such that
$$
w(S)=w_A(S):=\max_{\mu\in\mathcal{M}{(S\cap I, S\cap J)}} \;
     \sum_{(i,j)\in \mu} a_{ij}   \qquad \mbox{ for all } S\subseteq N.
$$
A matching $\mu\in\mathcal{M}(S\cap I, S\cap J)$ such that $w_A(S)=\sum_{(i,j)\in\mu}a_{ij}$ is an {\em optimal matching} for coalition $S\subseteq N$. 
Due to the non-negativity of $A$, we can (and will) assume (without loss of generality) that any optimal matching is a complete matching for the `short side' of the coalition.
We denote by $\mathcal{M}_A^*(S\cap I,S\cap J)$ the set of optimal matchings for coalition $S$. 
A player in $I$ or $J$ is called a row or column player, respectively.
Coalitions containing one row player and one column player are called mixed-pair coalitions and will be also be written as ordered pairs of the two players, always the row player written first.
More formally, $(i,j)$ denotes the mixed-pair coalition $\{i,j\}$, $i\in I$, $j\in J$.
Notice that $w_A(\{i,j\})=a_{ij}$ for all mixed-pair coalitions $(i,j)$, and $w_A(\{k\})=0$ for all players $k\in I\cup J$, thus, only the single-player and the positive-valued mixed-pair coalitions are essential in assignment game $(I\cup J,w_A)$.
Formally, $\mathcal{E}(w_A)=\big\{ \{k\} : k\in I\cup J\big\} \cup \big\{ \{i,j\} : a_{ij}>0, i\in I, j\in J\big\}$.

It is easily checked that assignment games are zero-normalized and superadditive, hence nonnegative and monotonic.
\cite{ShapleyShubik1972} proved that every assignment game $(I\cup J,w_A)$ is totally balanced, and the core is the set of solutions to quadratic many linear constraints given in terms of the underlying matrix $A$.
Namely, with any fixed optimal matching $\mu^*\in\mathcal{M}_A^*(I,J)$,
$$ %\begin{equation}%\label{assign-core}
\mathbf{C}(I\cup J,w_A)=\left\{(u,v)\in\mathbb{R}^I\times\mathbb{R}^J: 
\begin{array}{c@{}c@{}c@{}c@{}cl}
u_i & + & v_j & = & a_{ij} & (i,j)\in\mu^* \\
u_i &   &     & = & 0      & i\in I\setminus\mu^*(J) \\
    &   & v_j & = & 0      & j\in J\setminus\mu^*(I) \\
u_i & + & v_j & \geq & a_{ij} & (i,j)\in I\times J\setminus\mu^* \\
u_i & , & v_j & \geq & 0      & i\in\mu^*(J), j\in\mu^*(I) \\
\end{array} 
\right\},
$$ %\end{equation}
where $I\setminus\mu^*(J)$ and $J\setminus\mu^*(I)$ denote, respectively, the set of row and column players unmatched under $\mu^*$. 

\cite{ShapleyShubik1972} also proved that the core of any assignment game has a \emph{row-optimal} extreme point where each row player receives his core maximum payoff and each column player receives her core minimum payoff. Similarly, the core also has a \emph{column-optimal} extreme point where each column player receives her core maximum payoff and each row player receives his core minimum payoff.
We denote the row-optimal and the column-optimal core allocations by $(\overline{u},\, \underline{v})$ and $(\underline{u},\, \overline{v})$, respectively, where $\overline{u}_i=\max \{u_i: (u,v)\in\mathbf{C}(w_A)\}$ and $\underline{u}_i=\min \{u_i: (u,v)\in\mathbf{C}(w_A)\}$ for all $i\in I$, similarly, $\overline{v}_j=\max \{v_j: (u,v)\in\mathbf{C}(w_A)\}$ and $\underline{v}_j=\min \{v_j: (u,v)\in\mathbf{C}(w_A)\}$ for all $j\in J$.

The average of the two side-optimal core allocations was recommended by \cite{Thompson1981} as the ``fair'' allocation favoring neither side in the core. It was shown to coincide with the tau-value (the definition is presented later)
%, a point-valued solution to cooperative games introduced by \cite{Tijs1981}, 
of the assignment game by \cite{NunezRafels2002}.
We remark that in assignment games the tau-value is ``typically'' different from the nucleolus (see e.g. Example~\ref{exam:4player-assign-tau}), let alone from the Shapley value (that is ``typically'' not even in the core).
We will show (in Theorem~\ref{theo:tau=nucleolus}), however, that in PMAS-admissible assignment games the tau-value and the nucleolus coincide, yet they could be different from the Shapley value (even if that allocation happens to be in the core).

Since in convex games the Shapley value is always a core allocation and the array of Shapley values of the subgames form a PMAS \citep{Sprumont1990}, for reference, we recall the following characterization of convex assignment games.
\begin{proposition}[\citealp{SolymosiRaghavan2001}]
\label{prop:assignment-convex}
An assignment game is convex if and only if it is induced by a (column permuted) diagonal matrix, or equivalently, if it is the composition of $(1+1)$-player assignment games.
\end{proposition}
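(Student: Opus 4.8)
The plan is to prove the statement as a short cycle of implications, after first recording the elementary combinatorial fact underlying the whole proposition: a nonnegative matrix is a column permutation of a (rectangular) diagonal matrix precisely when its positive entries form a partial matching, i.e. no two of them lie in a common row or a common column. With this reformulation, the three conditions read: (A) the game $w_A$ is convex; (B) the positive support of $A$ is a matching; (C) $w_A$ is the composition of $(1+1)$-player assignment games. I would establish $(\mathrm{A})\Rightarrow(\mathrm{B})$, then $(\mathrm{B})\Leftrightarrow(\mathrm{C})$, and close the cycle with $(\mathrm{C})\Rightarrow(\mathrm{A})$.

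For $(\mathrm{A})\Rightarrow(\mathrm{B})$ I would argue by contraposition. Suppose the positive support is not a matching; then without loss of generality some row $i$ carries two positive entries $a_{ij}>0$ and $a_{ik}>0$ with $j\ne k$ (the common-column case follows by transposition). I would test the convexity inequality for column player $j$ along the nested coalitions $S=\{i\}\subseteq T=\{i,k\}$, with $j\notin T$: the marginal contribution of $j$ to $S$ is $w_A(\{i,j\})-w_A(\{i\})=a_{ij}$, whereas its marginal contribution to the larger coalition $T$ is $w_A(\{i,j,k\})-w_A(\{i,k\})=\max(a_{ij},a_{ik})-a_{ik}$, which is strictly smaller than $a_{ij}$ because $a_{ik}>0$ and a coalition containing a single row player can match only one pair. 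The marginal contribution of $j$ therefore strictly decreases as the coalition grows, violating convexity; this is exactly the \emph{substitution effect} between two column players competing for the single row player $i$. Hence a convex assignment game must have diagonal structure.

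The equivalence $(\mathrm{B})\Leftrightarrow(\mathrm{C})$ is bookkeeping resting on the composition facts already recorded in Section~\ref{prelims}. If the positive support is a matching, pairing each positively matched row with its column produces mixed-pair subgames of value $a_{ij}>0$, while the remaining rows and columns carry only zero-valued mixed pairs; since every coalition drawing players from two different pieces is then inessential (cf. the description of $\mathcal{E}(w_A)$ and Proposition~\ref{prop:inessential-pmas}), $w_A$ is the composition of these elementary $(1+1)$-player games, and conversely such a composition is induced by a block matrix with a single cell per block, which is a column permutation of a diagonal matrix. For $(\mathrm{C})\Rightarrow(\mathrm{A})$ I would then use this decomposition directly: each $(1+1)$-player assignment game is convex, since the only non-trivial supermodularity check reads $0=w(\{i\})+w(\{j\})\le w(\{i,j\})+w(\emptyset)=a_{ij}$ and holds by nonnegativity of $A$, and convexity is preserved under composition because the coalitional function is additive across the pieces and a sum of supermodular functions is supermodular. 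Thus the composite game is convex.

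I expect the only genuine content to be the marginal-contribution computation in the $(\mathrm{A})\Rightarrow(\mathrm{B})$ step; the two converse directions and the matrix/decomposition dictionary are routine. The one place demanding care will be the edge-case bookkeeping in $(\mathrm{B})\Leftrightarrow(\mathrm{C})$: for non-square matrices and for rows or columns outside the matching, the ``missing'' diagonal cells must be read as zero-valued mixed pairs (or the surplus players as isolated null components) so that the composition statement holds literally rather than only up to such trivial pieces.
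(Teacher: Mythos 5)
The paper does not prove this proposition at all: it is imported verbatim from the cited reference \citep{SolymosiRaghavan2001}, so there is no in-paper argument to compare against. Your proof is correct and supplies what the paper omits; the substitution computation $w_A(\{i,j,k\})-w_A(\{i,k\})=\max(a_{ij},a_{ik})-a_{ik}<a_{ij}$ is exactly the standard argument for the hard direction, and your handling of the converse (two-player supermodularity plus preservation of supermodularity under composition, with the null-player/rectangular edge cases flagged) is sound. No gaps.
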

The Shapley value, the nucleolus, and the tau-value are well known to coincide in 2-player superadditive (or equivalently, convex or assignment) games.
Combined with the decomposability of the Shapley value and the nucleolus on the class of totally balanced games,  Proposition~\ref{prop:assignment-convex} and Theorem~\ref{theo:tau=nucleolus} (where we prove that the nucleolus and the tau-value are the same in PMAS-admissible, in particular in convex games) imply that all three aforementioned point-valued solutions coincide in convex assignment games.
 
In the following subsections we discuss special types of assignment games which will turn out to be the key building blocks for our main characterization result (Theorem~\ref{theo:assign-PMAS-char}).

\subsection{Assignment games with a veto player}

Suppose assignment game $(I\cup J, w_A)$ is free of null players (i.e., there is no player $i$ such that $w_A(S\cup\{i\})=w_A(S)$ for all $S\subseteq N\setminus\{i\}$), or equivalently, that the underlying matrix $A$ contains neither a full zero row nor a full zero column. Then there is a veto player in $w_A$ if and only if $\vert I\vert=1$ or $\vert J\vert=1$ or both.
In other words, row player $i\in I$ is a veto player in assignment game $(I\cup J, w_A)$ if and only if all positive entries of $A$ are in row $i$. In this case all row players in $I\setminus\{i\}$ (if any) are null players. Similarly for column player $j\in J$.
It is then clear that there is a veto player in assignment game $(I\cup J, w_A)$ if and only if only one row or only one column of $A$ contains positive entries.

We know from \citep{Sprumont1990} that the ``veto player gets everything and all other players get nothing'' allocation rule generates a PMAS in any monotonic game with a veto player (a veto player is obviously also a veto player in any subgame).
The natural question arises whether this extreme allocation is the only PMAS-extendable core allocation in veto-controlled monotonic games? It is clear that any other player in any subgame not containing a veto player is a null player, hence should get zero in any core allocation in the subgame. Could this be otherwise in subgames containing the veto player?

Next we show that if a veto-controlled monotonic game is an assignment game then all core allocations are PMAS-extendable.
\begin{proposition}
\label{prop:assign-veto}
If an assignment game has a single row player or a single column player then every core allocation is PMAS-extendable.
\end{proposition}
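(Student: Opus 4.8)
The plan is to handle the single-row-player case directly and to obtain the single-column case by symmetry (transpose $A$ and swap the roles of $I$ and $J$). So assume $I=\{i_0\}$, and set $M:=w_A(N)=\max_{j\in J}a_{i_0 j}$, attained at a top column player $j^*$. The first step is to record the structure of the game and its core. Since value can only be carried through $i_0$, we have $w_A(S)=\max_{j\in S\cap J}a_{i_0 j}$ whenever $i_0\in S$ and $S\cap J\neq\emptyset$, and $w_A(S)=0$ otherwise. Fixing the optimal matching $\mu^*=\{(i_0,j^*)\}$, the Shapley--Shubik core description then tells us that every core allocation $(u,v)$ satisfies $v_j=0$ for all $j\neq j^*$, obeys $u_{i_0}+v_{j^*}=M$ with $u_{i_0},v_{j^*}\geq 0$ (hence $0\le u_{i_0}\le M$), and fulfils $u_{i_0}\ge a_{i_0 j}$ for every $j\neq j^*$. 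This last family of inequalities is the crucial input below.

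Next I would fix an arbitrary core allocation $(u,v)$ and write down an explicit candidate PMAS $[x^S_k]$. For $S\not\ni i_0$ set every entry to $0$ (this is in fact forced: $w_A(S)=0$ and, by monotonicity from the singletons, all suballocation entries are nonnegative). For $S\ni i_0$ put
\[
x^S_{i_0}=\begin{cases} u_{i_0} & \text{if } j^*\in S,\\ \max_{j\in S\cap J}a_{i_0 j} & \text{if } j^*\notin S,\ S\cap J\neq\emptyset,\\ 0 & \text{if } S\cap J=\emptyset,\end{cases}
\]
set $x^S_{j^*}=v_{j^*}$ when $j^*\in S$, and give every other column player $x^S_j=0$. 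Intuitively, $i_0$ collects the worth it can generate inside $S$, except for the share $v_{j^*}$ that is reserved for the top column player once the pair $\{i_0,j^*\}$ is present; all remaining column players get nothing everywhere, which is consistent with their terminal payoff $0$.

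Then subgame efficiency is immediate case by case: for $j^*\in S$ the suballocation sums to $u_{i_0}+v_{j^*}=M=w_A(S)$, and for $j^*\notin S$ it sums to $\max_{j\in S\cap J}a_{i_0 j}=w_A(S)$. The substance of the argument, and the step I expect to be the only real obstacle, is payoff monotonicity. For the column players it is easy: each $j\neq j^*$ receives $0$ throughout, while $j^*$ receives $0$ until $i_0$ joins and then the constant $v_{j^*}\ge 0$. The delicate part is the row player $i_0$, and the key observation is that $x^S_{i_0}$ never exceeds its terminal value $u_{i_0}$: when $j^*\notin S$ one has $x^S_{i_0}=\max_{j\in S\cap J}a_{i_0 j}\le u_{i_0}$ precisely because $u_{i_0}\ge a_{i_0 j}$ for all $j\neq j^*$, and when $j^*\in S$ one has $x^S_{i_0}=u_{i_0}$ exactly. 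Splitting a pair $S\subseteq T$ into three cases according to whether $j^*$ lies in $S$, in $T\setminus S$, or outside $T$, and using $\max_{j\in S\cap J}a_{i_0 j}\le\max_{j\in T\cap J}a_{i_0 j}$ in the last case, one verifies $x^S_{i_0}\le x^T_{i_0}$ throughout.

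This shows $[x^S_k]$ is a PMAS with $x^N=(u,v)$, so the chosen core allocation is PMAS-extendable; as it was arbitrary, every core allocation is. The degenerate situations cause no trouble: if the row has no positive entry, or if several columns attain $M$, the core collapses to the single point in which $i_0$ gets everything and all column players get $0$, for which the construction reduces to the veto rule of Theorem~\ref{theo:Sprumont1990}.
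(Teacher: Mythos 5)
Your proposal is correct and is essentially the paper's own construction written out in full generality: you give $i_0$ the worth it can generate in $S$ (which is at most $u_{i_0}$ by the core inequalities $u_{i_0}\ge a_{i_0 j}$), reserve $v_{j^*}$ for the optimally matched column player once the pair $\{i_0,j^*\}$ is present, and give everyone else zero, which is exactly the scheme the paper displays for the representative case $\vert J\vert=3$ (restricted there to the essential coalitions via Proposition~\ref{prop:inessential-pmas}) and then extends to $\vert J\vert\ge 4$ by padding with zero payoffs. The only stylistic difference is that you verify efficiency and monotonicity for all coalitions directly rather than reducing to essential coalitions, and you handle the degenerate ties explicitly; the mathematical content is the same.
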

\begin{proof}
We show the case when the assignment game is induced by an $1\times \vert J\vert$ nonnegative matrix. The case when there is a single column player is proved analogously.

If $\vert J\vert\leq 2$, i.e. there are at most three players, the claim follows from Theorem~\ref{theo:Sprumont1990}. 

Let $\vert J\vert=3$, $I=\{1\}$, $J=\{2,3,4\}$, and the assignment game $w$ be induced by the single row matrix $[a, b, c]$ where we assume without loss of generality that $a \geq b\geq c$.
Then $w(I\cup J)=a=w(\{1,2\})$.
Since players 3 and 4 are optimally not matched, they get 0 payoff in any core allocation. 
Thus the core is the set of allocations $\{(x_1,x_2,0,0) : x_1+x_2=a, x_1\geq b, x_2\geq 0\}$.
It follows from Proposition~\ref{prop:inessential-pmas} that the following simplified scheme (containing only the multi-player essential coalitions and the grand coalition) determines a PMAS which extends any arbitrarily chosen core allocation $(x_1,x_2,0,0)$:
$$
\setlength{\extrarowheight}{1pt}
\begin{array}{|c|ccc||c|c|c|ccc|}
\multicolumn{4}{|c||}{S} & w(S) & 0\leq & p^S_1 & p^S_2 & p^S_3 & p^S_4 \\
\hline
\hline
1 & 2 & . & . & a & = & x_1 & x_2 & . & . \\
1 & . & 3 & . & b & = & b   & .   & 0 & . \\
1 & . & . & 4 & c & = & c   & .   & . & 0 \\
\hline
1 & 2 & 3 & 4 & a & = & x_1 & x_2 & 0 & 0 \\
\hline
\end{array}
$$

In case of $\vert J\vert\geq 4$, any column player with pairwise surplus not among the highest three surplusses can be inserted in the PMAS with 0 payoffs everywhere when involved, just like player 4 in the above scheme.  
\end{proof}

\subsection{Assignment games with two row and two column players}

Let us summarize first what we know about the four-player assignment games with no veto player.
They must contain at least two row and at least two column players, none of them is a null player.
Thus, the underlying surplus matrix must be one of the following three types of $2\times 2$ matrices:
$$
\begin{array}{c|cc|}
\textbf{A} & 3 & 4 \\
\hline
1 & + & 0  \\
2 & 0 & +  \\
\hline
\end{array}
\qquad\qquad
\begin{array}{c|cc|}
\textbf{B} & 3 & 4 \\
\hline
1 & + & +  \\
2 & + & 0  \\
\hline
\end{array}
\qquad\qquad
\begin{array}{c|cc|}
\textbf{C} & 3 & 4 \\
\hline
1 & + & +  \\
2 & + & +  \\
\hline
\end{array}
$$

By Propossition~\ref{prop:assignment-convex}, Type A assignment games are convex games, so by Theorem~\ref{theo:Sprumont1990} they admit a PMAS and all their core allocations are PMAS-extendable. 

In contrast, by Theorem~\ref{theo:Sprumont1990}, Type C assignment games admit no PMAS.

Next we fill the gap and show that Type B assignment games, although not convex games, also admit a PMAS and all their core allocations are PMAS-extendable.

\begin{proposition}
\label{prop:assign-2x2}
Let assignment game $(I\cup J, w)$ with row players $I=\{1,2\}$ and column players $I=\{3,4\}$ be induced by pairwise surplus matrix  
$\begin{array}{c|cc|}
  & 3 & 4 \\
\hline
1 & a & b \\
2 & c & 0 \\
\hline
\end{array}$, where $a,b,c>0$. Then

\textbf{Case 1:} if $a+0\geq b+c$ then all core allocations are PMAS-extendable; 

\textbf{Case 2:} if $b+c>a+0$ then the game admits no PMAS.
\end{proposition}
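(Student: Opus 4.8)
The plan is to first record the coalitional function and the core of the game, and then treat the two cases by opposite methods: an explicit construction in Case~1 and an impossibility argument in Case~2. Writing out $w$, the only essential multi-player coalitions are the three positive mixed pairs $\{1,3\},\{1,4\},\{2,3\}$ with values $a,b,c$; every other coalition of size $\ge 2$ turns out to be inessential, so by Proposition~\ref{prop:inessential-pmas} a candidate PMAS is completely determined by its suballocations on these three pairs (together with the forced zeros on singletons). The grand coalition has $w(N)=\max(a,b+c)$, which is precisely what separates the two cases.

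For Case~1 ($a\ge b+c$) the optimal matching of $N$ is $\{(1,3),(2,4)\}$, so the Shapley--Shubik core description gives $x_2=x_4=0$, $x_1+x_3=a$, together with $x_1\ge b$ and $x_3\ge c$. Given any such core allocation $(x_1,0,x_3,0)$, I would exhibit the scheme $p^{13}=(x_1,x_3)$, $p^{14}=(b,0)$, $p^{23}=(0,c)$, zeros on singletons, and the induced (forced) decompositions on the inessential coalitions, so that $p^N=(x_1,0,x_3,0)$. Subgame efficiency is immediate, and payoff monotonicity then has to be checked coordinate by coordinate; each nontrivial inequality reduces exactly to one of the two core inequalities $x_1\ge b$ (controlling player $1$ when passing from $\{1,4\}$ to larger coalitions) and $x_3\ge c$ (controlling player $3$ when passing from $\{2,3\}$). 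This is where the hypothesis $a\ge b+c$ does its work, since it is precisely what makes $x_1\ge b$ and $x_3\ge c$ simultaneously feasible.

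For Case~2 ($b+c>a$) the matrix has no $2\times 2$ positive submatrix, so Theorem~\ref{theo:Sprumont1990}(4) does not apply and an independent argument is needed. Here $w(N)=b+c$ with optimal matching $\{(1,4),(2,3)\}$. Assuming a PMAS with $p^N=x$ in the core, I would first combine monotonicity with subgame efficiency on the two matched pairs (whose core values are $x_1+x_4=b$ and $x_2+x_3=c$) to force $p^{14}=(x_1,x_4)$ and $p^{23}=(x_2,x_3)$. The crux is then to bound the single number $p^{13}_1$ from two sides using the two three-player coalitions that contain the off-diagonal pair $\{1,3\}$: efficiency on $\{1,2,3\}$ (value $\max(a,c)$) with the monotonicity bounds $p^{123}_2\ge x_2$, $p^{123}_3\ge x_3$, $p^{123}_1\ge p^{13}_1$ yields $p^{13}_1\le \max(a-c,0)$, while efficiency on $\{1,3,4\}$ (value $\max(a,b)$) with $p^{134}_1\ge x_1$, $p^{134}_4\ge x_4$, $p^{134}_3\ge p^{13}_3=a-p^{13}_1$ yields $p^{13}_1\ge \min(a,b)$. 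Since $a,b,c>0$ give $\min(a,b)>a-c$ and $b+c>a$ gives $b>a-c$, together with $\min(a,b)>0$, we get $\min(a,b)>\max(a-c,0)$, so the two bounds are incompatible and no PMAS exists.

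The main obstacle is locating this two-sided squeeze in Case~2: the impossibility is invisible from any single coalition and only emerges by tracking one coordinate, $p^{13}_1$, through the two distinct three-player coalitions that share the pair $\{1,3\}$ and comparing the resulting bounds under the sign condition $b+c>a$. By contrast, Case~1 is essentially bookkeeping once one observes that all coalitions except the three positive pairs are inessential, so that payoff monotonicity collapses onto the two core inequalities $x_1\ge b$ and $x_3\ge c$.
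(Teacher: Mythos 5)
Your proof is correct and follows essentially the same route as the paper: in Case 1 the same explicit scheme on the three essential pairs with monotonicity reducing to $x_1\ge b$ and $x_3\ge c$, and in Case 2 the same squeeze via efficiency and monotonicity on $\{1,2,3\}$ and $\{1,3,4\}$, which in the paper's notation is the contradiction $a=p^{13}_1+p^{13}_3\le (a-c)^++(a-b)^+<a$. (Only note that your justification of $\min(a,b)>a-c$ needs both $c>0$, giving $a>a-c$, and the Case 2 hypothesis $b+c>a$, giving $b>a-c$; positivity alone does not suffice, but you do invoke both.)
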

\begin{proof}
In \textbf{Case 1}, $w(I\cup J)=a+0=w(\{1,3\})$.
Since players 2 and 4 form an inessential pair, they get 0 payoff in any core allocation. 
Thus, the core is the set of allocations $\{(x_1,0,x_3,0) : x_1+x_3=a, x_1\geq b, x_3\geq c\}$.
It follows from Proposition~\ref{prop:inessential-pmas} that the following simplified scheme (containing only the multi-player essential coalitions and the grand coalition) determines a PMAS which extends any arbitrarily chosen core allocation $(x_1,0,x_3,0)$:
$$
\setlength{\extrarowheight}{2pt}
\begin{array}{|cc|cc||c|c|cc|cc|}
\multicolumn{4}{|c||}{S} & w(S) & 0\leq & p^S_1 & p^S_2 & p^S_3 & p^S_4 \\
\hline
\hline
1 & . & 3 & . & 0<a & = & x_1 & . & x_3 & . \\
1 & . & . & 4 & 0<b & = & b & . & . & 0 \\
. & 2 & 3 & . & 0<c & = & . & 0 & c & . \\
\hline
1 & 2 & 3 & 4 & a+0 & = & x_1 & 0 & x_3 & 0 \\
\hline
\end{array}
$$
Notice that for any core allocation there is a unique PMAS which extends it, for $(x_1,0,x_3,0)$ the above one.
\medskip

In \textbf{Case 2}, $w(I\cup J)=b+c=w(\{1,4\})+w(\{2,3\})$.
Thus, the core is the set of allocations $\{(x_1,x_2,x_3,x_4) : x_1+x_4=b, x_2+x_3=c, x_1+x_3\geq a, x_1, x_2, x_3, x_4\geq 0\}$.

Suppose there is a PMAS for $w$. Then the following relations must hold:
\begin{itemize}
\item by Proposition~\ref{prop:inessential-pmas}, $p^{1234}_1=p^{14}_1$, $p^{1234}_4=p^{14}_4$, and $p^{1234}_2=p^{23}_2$, $p^{1234}_3=p^{23}_3$;
\item by individual payoff monotonicity, $p^{134}_1=p^{14}_1$, $p^{134}_4=p^{14}_4$, and $p^{123}_2=p^{23}_2$, $p^{123}_3=p^{23}_3$;
\item then from $w(\{1,2,3\})=a\vee c$ and $w(\{1,3,4\})=a\vee b$, by suballocation efficiency, $p^{134}_3=(a-b)^+$ and $p^{123}_1=(a-c)^+$,
\end{itemize}
where $a\vee b$ shorthands $\max\{a,b\}$ and $(a-b)^+$ shorthands $\max\{a-b,0\}$ for any two real numbers $a$ and $b$. 

Then the corresponding suballocations in the PMAS must look like this:
$$
\setlength{\extrarowheight}{2pt}
\begin{array}{|@{\;}c@{\;}c@{\;}|@{\;}c@{\;}c@{\;}||c|c||cc|cc||c}
\multicolumn{4}{|c||}{S} & w(S) & 0\leq & p^S_1 & p^S_2 & p^S_3 & p^S_4 \\
\hline
\hline
1 & . & 3 & . & 0<a & = & p^{13}_1 & . & p^{13}_3 & . \\
1 & . & . & 4 & 0<b & = & p^{14}_1 & . & . & p^{14}_4 \\
. & 2 & 3 & . & 0<c & = & . & p^{23}_2 & p^{23}_3 & . \\
\hline
1 & 2 & 3 & . & a\vee c & = & p^{123}_1 & p^{23}_2 & p^{23}_3 & . & p^{123}_1=(a-c)^+ \\
1 & . & 3 & 4 & a\vee b & = & p^{14}_1 & . & p^{134}_3 & p^{14}_4 & p^{134}_3=(a-b)^+ \\
\hline
1 & 2 & 3 & 4 & b+c & = & p^{14}_1 & p^{23}_2 & p^{23}_3 & p^{14}_4 \\
\hline
\end{array}
$$
From suballocation efficiency and individual payoff monotonicity we get $0<a=p^{13}_1+p^{13}_3\leq p^{123}_1+p^{134}_3 = (a-c)^+ + (a-b)^+ $.
Since $a,b,c>0$, so $(a-c)^+<a$ and $(a-b)^+<a$, both $(a-c)^+=a-c$ and $(a-b)^+=a-b$ must hold.
Then we get the contradiction $0<a \leq (a-c)^+ + (a-b)^+ = (a-c)+(a-b) = a+(a-c-b) < a$. 
Therefore, in case of $b+c>a+0$, the assignment game $w$ is not PMAS-admissible.
\end{proof}

\subsection{Assignment games with a veto mixed pair}

We say that a nonnegative matrix $A=\big[a_{ij}\geq 0\big]_{i\in I, j\in J}$ with $\vert I\vert,\vert J\vert\geq 2$ has a \emph{$\Gamma$-shaped support}, if there exist a row player $i_1\in I$ and a column player $j_1\in J$ such that $a_{kl}>0$ if $k=i_1$ and/or $l=j_1$, but  $a_{kl}=0$ whenever $k\neq i_1$ and $l\neq j_1$.
We call the entry $a_{i_1 j_1}$ the \emph{corner} of the $\Gamma$-shaped support, and say that it is \emph{dominant}, if $a_{i_1 j_1}\geq a_{i_1 l}+a_{k j_1}$ whenever $k\neq i_1$ and $l\neq j_1$.

Notice that in the assignment game induced by a nonnegative matrix with $\Gamma$-shaped support none of the players is a veto player or a null player.
The corner pair coalition $\{i_1,j_1\}$, however, has veto power, the worth of a coalition is zero unless it contains both $i_1$ and $j_1$.
For example, the support of matrix 
$$
\begin{array}{c|cccc|}
  & j_1 & j_2 & \ldots & j_{\vert J\vert} \\
\hline
i_1 & a_{11} & a_{12} & \ldots & a_{1\vert J\vert} \\
i_2 & a_{21} & 0      & \ldots & 0   \\
\vdots & \vdots  & \vdots &  & \vdots \\
i_{\vert I\vert} & a_{\vert I\vert1} & 0 & \ldots & 0  \\
\hline
\end{array}
$$
is $\Gamma$-shaped if $a_{ij}>0$ for all $1\leq i\leq \vert I\vert$ and $1\leq j\leq \vert J\vert$. 
Its corner is dominant if $a_{11}\geq a_{1l}+a_{k1}$ for all $2\leq k\leq \vert I\vert$ and $2\leq l\leq \vert J\vert$. 
In the corresponding assignment game the corner pair coalition $\{i_1,j_1\}$ has veto power, although neither $i_1$ nor $j_1$ is a veto player.
In the sequel we assume (without loss of generality) that the $\Gamma$-shaped support of the underlying matrix is composed of the first row and the first column, and starting from the corner the other entries are arranged in a weakly decreasing order in both directions, that is $a_{1l}\geq a_{1l'}$ for all $2\leq l<l'\leq \vert J\vert$, and $a_{k1}\geq a_{k'1}$ for all $2\leq k<k'\leq \vert I\vert$.

As expected, the characterization of the $2\times 2$ type B assignment games (Proposition~\ref{prop:assign-2x2}) extends to larger games induced by matrices with $\Gamma$-shaped support.
\begin{proposition}
\label{prop:assign-Gamma-shaped}
An assignment game induced by a matrix with $\Gamma$-shaped support admits a PMAS if and only if its corner is dominant.
\\
Moreover, if the game is PMAS-admissible then every core allocation is PMAS-extendable.
\end{proposition}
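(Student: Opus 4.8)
The plan is to prove the two implications separately and to settle PMAS-extendability of every core allocation together with the sufficiency direction. Throughout I use the stated normalization that the $\Gamma$-shaped support occupies the first row and first column with weakly decreasing entries, so that $\max_{l\ge 2}a_{1l}=a_{12}$ and $\max_{k\ge 2}a_{k1}=a_{21}$, and I write $i_1,j_1$ for the corner players. The first thing I would record is the structure of the core when the corner is dominant. Since then $a_{1l}+a_{k1}\le a_{11}$ for all $k,l\ge 2$, matching only the corner pair is optimal for the grand coalition, so $w(N)=a_{11}$ and $\mu^*=\{(i_1,j_1)\}$ is an optimal matching. Feeding $\mu^*$ into the Shapley--Shubik description shows that every player other than $i_1,j_1$ receives $0$, while $u_{i_1}+v_{j_1}=a_{11}$ with $u_{i_1}\ge a_{12}$ and $v_{j_1}\ge a_{21}$; in particular the core is nonempty exactly when $a_{12}+a_{21}\le a_{11}$, i.e. exactly when the corner is dominant.

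For necessity I would argue by contraposition and localize to a $2\times 2$ subgame. If the corner is not dominant, pick $k,l\ge 2$ with $a_{1l}+a_{k1}>a_{11}$ and restrict attention to the four players $\{i_1,i_k,j_1,j_l\}$. Because $a_{i_k j_l}=0$, the induced submatrix is exactly a Type~B matrix as in Proposition~\ref{prop:assign-2x2} with $(a,b,c)=(a_{11},a_{1l},a_{k1})$, and it falls in Case~2 there ($b+c>a$), so the corresponding four-player subgame admits no PMAS. Since the restriction of any PMAS of $(I\cup J,w)$ to the coalitions contained in $\{i_1,i_k,j_1,j_l\}$ would be a PMAS of that subgame, the whole game can admit no PMAS either. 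This is the clean reason the $2\times 2$ analysis propagates to arbitrary $\Gamma$-shaped supports.

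For sufficiency and the ``moreover'' claim I would fix an arbitrary core allocation $(u,v)$ (so $u_{i_1}+v_{j_1}=a_{11}$, $u_{i_1}\ge a_{12}$, $v_{j_1}\ge a_{21}$, and all other coordinates $0$) and exhibit a PMAS with $p^N=(u,v)$. The first observation is that every non-corner player is pinned to $0$ in every suballocation: for such a player $k$, subgame efficiency on the singleton gives $p^{\{k\}}_k=0$, and payoff monotonicity together with $p^N_k=0$ forces $0=p^{\{k\}}_k\le p^S_k\le p^N_k=0$ for every $S\ni k$. Hence only $i_1$ and $j_1$ can carry positive payoff, and I would define the scheme by: $p^S_{i_1}=u_{i_1}$ and $p^S_{j_1}=v_{j_1}$ whenever $\{i_1,j_1\}\subseteq S$; $p^S_{i_1}=\max_{l\in S\cap J}a_{1l}$ when $i_1\in S$ and $j_1\notin S$; $p^S_{j_1}=\max_{k\in S\cap I}a_{k1}$ when $j_1\in S$ and $i_1\notin S$ (empty maxima read as $0$); and $0$ on all non-corner coordinates. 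Subgame efficiency is then immediate from the three cases $w(S)=a_{11}$, $w(S)=\max_{l\in S\cap J}a_{1l}$, $w(S)=\max_{k\in S\cap I}a_{k1}$ for coalitions containing both corner players, only $i_1$, or only $j_1$.

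The heart of the argument --- and the step I expect to be the main obstacle --- is verifying payoff monotonicity, and this is exactly where dominance re-enters. Within the family of coalitions containing both corner players the payoffs of $i_1,j_1$ are the constants $u_{i_1},v_{j_1}$, and within either one-sided family the relevant maximum is nondecreasing in $S$, so the only delicate transitions are those in which a coalition acquires the missing corner player. When $S$ gains $j_1$, monotonicity of $i_1$ requires $\max_{l\in S\cap J}a_{1l}\le u_{i_1}$, which holds because $\max_{l\in S\cap J}a_{1l}\le a_{12}\le u_{i_1}$; symmetrically, gaining $i_1$ requires $\max_{k\in S\cap I}a_{k1}\le v_{j_1}$, guaranteed by $a_{21}\le v_{j_1}$. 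These are precisely the core inequalities made available by the dominant corner, mirroring the contradiction that killed the non-dominant case. Since the construction works for every core allocation, all core allocations are PMAS-extendable, completing both directions. In the boundary case $a_{11}=a_{12}+a_{21}$ the core degenerates to the single point $u_{i_1}=a_{12},\,v_{j_1}=a_{21}$, so the scheme remains consistent with the composition rule of Proposition~\ref{prop:inessential-pmas} even when several optimal matchings exist.
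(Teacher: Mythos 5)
Your proof is correct, and the necessity direction is exactly the paper's argument (localize to a $2\times 2$ Type~B submatrix in Case~2 of Proposition~\ref{prop:assign-2x2}; a PMAS restricts to a PMAS of any subgame). For sufficiency and extendability, however, you take the direct constructive route that the paper explicitly mentions but declines to carry out: the paper instead splits the $\Gamma$-shaped matrix into a single-positive-row matrix plus a single-positive-column matrix, writes $w=w^1+w^2$ as a sum of two monotonic veto-controlled assignment games, invokes Sprumont's characterization (Theorem~\ref{theo:Sprumont1990}) for admissibility, and then establishes core additivity $\mathbf{C}(w)=\mathbf{C}(w^1)+\mathbf{C}(w^2)$ to pull extendability of every core allocation back from Proposition~\ref{prop:assign-veto}. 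The paper's decomposition fits its overarching theme (PMAS-admissible assignment games as compositions/sums of veto-controlled pieces) and exposes the core-additivity fact as a byproduct; your explicit scheme is more self-contained, avoids the core-additivity step entirely, and makes visible exactly where dominance is needed, namely in the monotonicity check at the transition where a one-sided coalition acquires the missing corner player (via $a_{12}\le u_{i_1}$ and $a_{21}\le v_{j_1}$). One small blemish: your parenthetical claim that ``the core is nonempty exactly when $a_{12}+a_{21}\le a_{11}$'' is misstated --- the core of an assignment game is always nonempty; what fails without dominance is only that $\{(i_1,j_1)\}$ is an optimal matching, so that particular description of the core no longer applies. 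Since you only use the description in the dominant case, this does not affect the argument.
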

\begin{proof}
We prove sufficiency in an indirect way by using the characterization by \cite{Sprumont1990} (see Theorem~\ref{theo:Sprumont1990}), although a direct constructive proof akin to that of Case~1 in Proposition~\ref{prop:assign-2x2} could also be given.

Let assignment game $(I\cup J,w)$ be induced by a matrix with $\Gamma$-shaped support whose corner is dominant. 
Then this matrix can be written as the sum of two nonnegative matrices, one with a single positive row, the other with a single positive column, as follows:
$$
\begin{array}{|cccc|}
\hline
a_{11} & a_{12} & \ldots & a_{1\vert J\vert} \\
a_{21} & 0      & \ldots & 0   \\
\vdots  & \vdots &  & \vdots \\
a_{\vert I\vert 1} & 0 & \ldots & 0  \\
\hline
\end{array}
=
\begin{array}{|cccc|}
\hline
 a_{11}-a_{21} & a_{12} & \ldots & a_{1\vert J\vert} \\
 0 & 0      & \ldots & 0   \\
\vdots  & \vdots &  & \vdots \\
 0 & 0 & \ldots & 0  \\
\hline
\end{array}
+
\begin{array}{|cccc|}
\hline
 a_{21} & 0 & \ldots & 0 \\
 a_{21} & 0      & \ldots & 0   \\
 \vdots  & \vdots &  & \vdots \\
 a_{\vert I\vert 1} & 0 & \ldots & 0  \\
\hline
\end{array}.
$$
Both matrices on the right induce an assignment game with a single veto player: the first row player $i_1$ in the first game (denoted by $w^1$), and the first column player $j_1$ in the second game (denoted by $w^2$).
It is easily checked that $w=w^1+w^2$. For example, since the corner is dominant, $w(\{i_1,i_2,j_1,j_2\})=a_{11}=(a_{11}-a_{21})+a_{21}=w^1(\{i_1,i_2,j_1,j_2\})+w^2(\{i_1,i_2,j_1,j_2\})$. 
Since assignment games are 0-normalized monotonic games, by Theorem~\ref{theo:Sprumont1990} we get that our game $w$, as a positive linear combination of monotonic games with at least one veto player, is PMAS-admissible.

In this dominant corner case, notice that in core payoffs only the players $i_1$ and $j_1$ will get positive payoff, since all other optimally matched pairs are of zero value. 
Take any allocation $y=(u_1,0,\ldots,0;\; v_1,0,\ldots,0)$ in the core of $w$. The positive payoffs satisfy $u_1+v_1=a_{11} $, $u_1\geq a_{12}$, and $v_1\geq a_{21}$.
As in the proof of Proposition~\ref{prop:assign-veto}, it is easily checked that the payoff vector $y^1=(u_1,0,\ldots,0;\; v_1-a_{21},0,\ldots,0)$ is a core allocation in $w^1$, and the payoff vector $y^2=(0,0,\ldots,0;\; a_{21},0,\ldots,0)$ is a core allocation in $w^2$. Thus, in this special case the core mapping is additive, that is, $\mathbf{C}(w)=\mathbf{C}(w^1)+\mathbf{C}(w^2)$ (not just superadditive as it is in general, see \citep{MiquelNunez2011} for other kinds of sufficient conditions for additivity of the core on assignmet games). Since, by Proposition~\ref{prop:assign-veto}, core allocation $y^1$ is PMAS-extendable in $w^1$ and core allocation $y^2$ is PMAS-extendable in $w^2$, we get that the arbitrarily chosen core allocation $y=y^1+y^2$ is PMAS-extendable in $w=w^1+w^2$.  

Finally, if the corner of the $\Gamma$-shaped support is not dominant than it is not dominant in the upper-left $2\times 2$ submatrix, so by Case 2 in Proposition~\ref{prop:assign-2x2}, the corresponding subgame has no PMAS, hence the game itself cannot be PMAS-admissible. 
\end{proof}

\section{The main structural result}\label{mainresult}

We prove that any PMAS-admissible assignment game is the composition of the three special types of PMAS-admissible assignment games discussed in the previous section, and also give a characterization in terms of the properties of the underlying pairwise surplus matrix.  
\begin{theorem}
\label{theo:assign-PMAS-char}
The following statements are equivalent. 
\\ An assignment game ...
\begin{enumerate}
\item admits a PMAS;
\item is induced by a block-diagonal nonnegative matrix with blocks having support of one of the following three types: (i) single row vector, (ii) single column vector, (iii) $\Gamma$-shaped with dominant corner;
\item is decomposable in assignment games with veto player or with dominant veto mixed pair. 
\end{enumerate}
Moreover, if an assignment game admits a PMAS then every core allocation can be extended to a PMAS.
\end{theorem}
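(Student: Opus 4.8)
The plan is to prove the three equivalences together with the final extendability claim, concentrating all the real work in one combinatorial lemma; the other implications are bookkeeping on top of Section~\ref{assignmentgames}. The equivalence (2)$\Leftrightarrow$(3) is essentially a reformulation: a block-diagonal matrix is exactly one whose induced game is the composition of its block games, a single-row or single-column block induces an assignment game with a veto player (by the discussion preceding Proposition~\ref{prop:assign-veto}), and a $\Gamma$-shaped block with dominant corner induces an assignment game with a dominant veto mixed pair (Proposition~\ref{prop:assign-Gamma-shaped}), with a single elementary block read as the trivial composition. For (2)$\Rightarrow$(1) I would observe that the game is the composition of its block games, each of which is PMAS-admissible (Proposition~\ref{prop:assign-veto} for types (i),(ii), Proposition~\ref{prop:assign-Gamma-shaped} for type (iii)), so PMAS-admissibility of the whole follows from Claim~2 of Proposition~\ref{prop:inessential-pmas}.

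The crux is (1)$\Rightarrow$(2). I would first record the \emph{hereditary} property: restricting a PMAS to the players of any subgame again satisfies subgame efficiency and payoff monotonicity, hence is a PMAS of that subgame. Consequently every $2\times 2$ submatrix of $A$, viewed as the four-player subgame on two rows and two columns, is PMAS-admissible. By Theorem~\ref{theo:Sprumont1990}(4) it cannot be of type C (all four entries positive), and by Case~2 of Proposition~\ref{prop:assign-2x2} it cannot be a non-dominant type B. Thus every $2\times2$ submatrix is either a type A (two positive entries on a diagonal) or a type B with dominant corner, and the task reduces to a purely matrix-theoretic statement.

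The combinatorial lemma is then: a nonnegative matrix all of whose $2\times2$ submatrices are of these two kinds is, up to row and column permutation, block-diagonal with blocks of types (i)--(iii). I would argue over the connected components of the bipartite support graph. A component using a single row (resp.\ column) gives type (i) (resp.\ (ii)). For a component with at least two rows and two columns, pick a position $(i_1,j_1)$ of maximum value $M>0$ in the component. The key step: if some column $c\neq j_1$ with $a_{i_1 c}>0$ also had $a_{k_1 c}>0$ for a row $k_1\neq i_1$, then $a_{k_1 j_1}=0$ (otherwise rows $\{i_1,k_1\}$, columns $\{j_1,c\}$ form a type C); but then this submatrix is a type B whose dominant corner $(i_1,c)$ forces $a_{i_1 c}\ge M+a_{k_1 c}>M$, contradicting maximality of $M$. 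Hence every column in row $i_1$'s support other than $j_1$ is ``private'' to $i_1$, and symmetrically every row in column $j_1$'s support is private to $j_1$; so the component is precisely the cross formed by row $i_1$ and column $j_1$, i.e.\ $\Gamma$-shaped, and its corner is dominant since each $2\times2$ submatrix on $\{i_1,k\},\{j_1,l\}$ yields $M\ge a_{i_1 l}+a_{k j_1}$. I expect \emph{this} step to be the main obstacle: one must rule out all ``interior'' positive entries (sharing neither row $i_1$ nor column $j_1$) and longer ``path'' configurations that the no-$K_{2,2}$ condition alone does not forbid. The maximality argument is exactly what collapses them, because any column simultaneously adjacent to $i_1$ and to a second row immediately violates the maximum.

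Finally, for the extendability claim I would assume (2) and write the game as the composition of its block games. The core of a composition is the product of the cores of the blocks (recalled in Section~\ref{prelims}), so any core allocation $x$ splits as $x=(x^1,\ldots,x^m)$ with $x^r$ in the core of block $r$. In each block all core allocations are PMAS-extendable (Proposition~\ref{prop:assign-veto} for types (i),(ii); Proposition~\ref{prop:assign-Gamma-shaped} for type (iii)); composing the extending PMAS-es via Claim~2 of Proposition~\ref{prop:inessential-pmas} produces a PMAS of the whole game whose grand-coalition suballocation is exactly $x$. Since $x$ was an arbitrary core allocation, every core allocation is PMAS-extendable.
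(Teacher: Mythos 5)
Your proposal is correct and follows essentially the same route as the paper: both reduce $(1)\Rightarrow(2)$ to the heredity of PMASes on $2\times 2$ subgames (excluding type C via Theorem~\ref{theo:Sprumont1990} and non-dominant type B via Proposition~\ref{prop:assign-2x2}), and both obtain $(2)\Leftrightarrow(3)$, $(2)\Rightarrow(1)$ and the extendability claim by composing block PMASes using Propositions~\ref{prop:inessential-pmas}, \ref{prop:assign-veto} and \ref{prop:assign-Gamma-shaped}. The only difference is organizational: you run the combinatorial step over connected components of the support graph anchored at a maximum entry, while the paper proves a local claim about a row with two positive entries and then sorts and peels off blocks iteratively; both yield the same block-diagonal conclusion.
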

\begin{proof}
We show the chain of implications $1.\Rightarrow 2.\Rightarrow 3.\Rightarrow 1.$ for assignment game $(I\cup J,w)$ induced by nonnegative matrix $A$ with $\vert I\vert$ rows and $\vert J\vert$ columns.
If $\vert I\vert=1$ or $\vert J\vert=1$ or $\vert I\vert=\vert J\vert=2$, all statements have been proved in Propositions~\ref{prop:assign-veto} and \ref{prop:assign-2x2}.
Thus, we assume that $A$ contains a $2\times 3$ and/or a $3\times 2$ submatrix. 
Furthermore, since a null player receives zero payoff in any core allocation and in any suballocation of a PMAS, we assume (without loss of generality) that none of the players is a null player, or equivalently, that in each row and column of $A$ there is at least one positive entry.
\medskip

$1.\Rightarrow 2.$\quad
Suppose $w$ is PMAS-admissible. 
We first show the following 
\begin{itemize}
\item \emph{Claim: if $w$ is PMAS-admissible and a row (column) of the underlying matrix $A$ contains 2 positive entries then only at most one of the two corresponding columns (rows) can contain other positive entries.}
\end{itemize}
Suppose not, and in the column of $a_{11}>0$ as well as in the column of $a_{12}>0$ there is another positive entry, say $a_{j1}>0$ and  $a_{k2}>0$, respectively. 
If $j=k$ then $A$ contains a $2\times 2$ submatrix with 4 positive entries, so by Theorem~\ref{theo:Sprumont1990}, the corresponding $(2+2)$-player subgame admits no PMAS, a contradiction.
If $j\neq k$ then in the $2\times 2$ submatrix with rows 1 and $j$ and columns 1 and 2, by Proposition~\ref{prop:assign-2x2}, $a_{11}\geq a_{12}+a_{j1}$ and $a_{j2}=0$ must hold, otherwise the corresponding $(2+2)$-player subgame would not admit a PMAS, a contradiction.
Similarly, in the $2\times 2$ submatrix with rows 1 and $k$ and columns 1 and 2, $a_{12}\geq a_{11}+a_{k2}$ and $a_{k1}=0$ must hold. These two inequalities, however, would imply $a_{11}> a_{12}>a_{11}$, again a contradiction, proving the Claim.

Now, arrange the columns of matrix $A$ so that the entries in the first row come in a nonincreasing order from left to right, ties are broken arbitrarily.

If there are at least two positive entries in the first row, the Claim implies that at most one of the corresponding columns can contain positive entries, and they are the only positive entries in their row. If one of these columns has at least two positive entries, the submatrix spanned by all these positive entries has a $\Gamma$-shaped support whose corner must be dominant by Proposition~\ref{prop:assign-Gamma-shaped}, thus it must be $a_{11}$.
In this subcase, we rearrange the rows so that the entries in the first column come in a nonincreasing order, and thus get a submatrix of $\Gamma$-shaped support with dominant corner in the upper-left corner of $A$.  
In the other subcase, that is when there is no other positive entry in any of the columns of the positive entries in the first row, we get a single row positive submatrix in the upper-left corner of $A$.

If there is only one positive entry in the first row, we do the above routine by interchanging the roles of rows and columns, and get either a submatrix of $\Gamma$-shaped support with dominant corner or a single column positive submatrix in the upper-left corner of $A$.

When we delete from $A$ all rows and columns of the positive entries in the previously constructed submatrix, we do not delete any other positive entries of $A$. Thus, when we repeat the above routine for the remaining submatrix $A'$ of $A$, the support of the newly obtained submatrix in the upper-left corner of $A'$ must be either $\Gamma$-shaped with dominant corner or a single row or a single column.
At the end of this iterative rearrangement and deletion process matrix $A$ gets the block-diagonal structure stated in item $2.$ that is illustrated below
$$
\setlength{\extrarowheight}{2pt}
\begin{array}{|@{\;}c@{\;}c@{\;}c@{\;}|@{\;}c@{\;}c@{\;}c@{\;}|@{\;}c@{\;}|}
\hline
\fbox{+} & + & + &  &  &  &  \\
 + & 0 & 0 &  &  &  &  \\
\hline
  &  &  & \fbox{+} & + & + &  \\[2pt]
\hline
  &  &  &  &  &  & \fbox{+} \\
  &  &  &  &  &  & + \\
\hline
\end{array}
$$
where the dominant corners and entries of the block supports are boxed and the full 0 blocks are left empty.   
\medskip

$2.\Rightarrow 3.$\quad The implication is a straightforward consequence of the definitions of an assignment game, a veto player, and the composition of games.
\medskip

$3.\Rightarrow 1.$\quad The implication as well as the PMAS-extendability of every core allocation in PMAS-admissible assignment games follow from Propositions~\ref{prop:inessential-pmas}, \ref{prop:assign-Gamma-shaped}, and \ref{prop:assign-veto}.
\end{proof}

\section{The nucleolus and the tau-value}\label{nucleolus=tau}

We investigate the relation of the nucleolus and the tau-value in PMAS-admissible assignment games.
We find that, although they are ``typically'' different in assignment games, but they coincide on the PMAS-admissible subclass.

The tau-value was introduced by \cite{Tijs1981} as an efficient compromise value between an upper vector and a lower vector which, in balanced games, specify (not necessarily sharp) upper, respectively lower bounds for core payoffs. 
The \emph{upper vector} in game $(N,v)$ consists of the marginal contribution of the players to the grand coalition, $M_i(v)=v(N)-v(N\setminus\{i\})$ for all $i\in N$.
The \emph{lower vector} is defined by $m_i(v)=\max_{S: S\ni i} \left( v(S)-\sum_{j\in S\setminus\{i\}} M_j(v) \right)$ for all $i\in N$.
In balanced game $(N,v)$, for any $x\in\mathbf{C}(v)$, the relations $m_i(v)\leq x_i\leq M_i(v)$ hold for all $i\in N$.
Thus, there is a unique number $0\leq \kappa\leq 1$ for which $\sum_{i\in N} \tau_i(v)=v(N)$, where $\tau_i(v):=\kappa M_i(v)+(1-\kappa) m_i(v)$ is called the \emph{tau-value} of player $i\in N$ in balanced game $(N,v)$.
In line with the standard terminology, the tau-value of the game is the vector of the individual tau-values of the players, $\tau(v)=\left( \tau_i(v): i\in N \right)$.

In an arbitrary balanced game the tau-value need not be a core allocation.
\cite{Driessen1988} gave necessary and sufficient conditions for the tau-value to lie in the core. 
He also demonstrated that, unlike the core and the nucleolus, the tau-value is not independent of the inessential coalitions by presenting two different games with identical cores, but different tau-values, one of them even lying outside the core.
For assignment games, \cite{NunezRafels2002} proved that the tau-value is the midpoint of the line segment joining the row-optimal and the column-optimal core vertices, hence it is a core allocation, moreover, it is a locus of the core.

The nucleolus was introduced by \cite{Schmeidler1969}. For balanced games, it is the unique core allocation which lexicographically maximizes the nondecreasingly ordered vector of the satisfactions of coalitions over the core, where we define the \emph{satisfaction} of coalition $S$ at allocation $x$ as $f(S,x)=x(S)-v(S)$.  
\cite{Huberman1980} showed that for balanced games, the essential coalitions suffice to determine the nucleolus, the inessential coalitions can be ignored, just like for the core.
Consequently, for balanced games, the general criterion of \cite{Kohlberg1971} can also be simplified to contain only the essential coalitions as follows.
\begin{lemma}
\label{lemma:Kohlberg}
In balanced game $(N,v)$, the core allocation $x\in\mathbf{C}(v)$ is the nucleolus if and only if for any real number $t$ the family of essential coalitions 
$$\mathcal{F}_t(x)=\left\{S\in\mathcal{E}(v)\setminus\{N\}: x(S)-v(S)\leq t \right\}$$
is either empty or balanced on $N$ (i.e., there are positive numbers $\lambda_S>0$ for all $S\in\mathcal{F}_t(x)$ such that $\sum_{S\ni i,S\in\mathcal{F}_t(x)}\lambda_S =1$ for all $i\in N$).
\end{lemma}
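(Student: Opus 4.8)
The plan is to derive the claim from the classical Kohlberg criterion (\citealp{Kohlberg1971}) — which characterizes the nucleolus through the balancedness of the families of \emph{all} proper tight coalitions — by showing that the inessential coalitions are redundant in that test, so that one may restrict attention to $\mathcal{E}(v)$, exactly the reduction behind \citet{Huberman1980}. For $x\in\mathbf{C}(v)$ and $t\in\mathbb{R}$, write $\mathcal{G}_t(x)=\{S\in\mathcal{P}(N)\setminus\{N\}: x(S)-v(S)\le t\}$ for the full family and $\mathcal{F}_t(x)=\mathcal{G}_t(x)\cap\mathcal{E}(v)$ for its essential part. By Kohlberg's criterion $x$ is the nucleolus if and only if each $\mathcal{G}_t(x)$ is empty or balanced on $N$; hence it suffices to prove, for every $t$, the level-by-level equivalence that $\mathcal{G}_t(x)$ is empty or balanced exactly when $\mathcal{F}_t(x)$ is.

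First I would record the domination inequality underlying the reduction. If $S$ is inessential, witnessed by a partition $S=S_1\cup\dots\cup S_k$ with $v(S)\le\sum_j v(S_j)$, then for $x\in\mathbf{C}(v)$ we have $x(S)-v(S)=\sum_j x(S_j)-v(S)\ge\sum_j\big(x(S_j)-v(S_j)\big)\ge x(S_l)-v(S_l)$ for every $l$, the last step using $x(S_j)-v(S_j)\ge0$ for all $j$ by core feasibility. Thus the satisfaction of $S$ dominates that of each of its parts. Iterating the decomposition (refining any part that is still inessential, the process terminating because singletons are always essential) yields, for each inessential $S\in\mathcal{G}_t(x)$, a partition of $S$ into \emph{essential} coalitions, each with satisfaction at most $x(S)-v(S)\le t$, hence all lying in $\mathcal{F}_t(x)$. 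Since this partition is disjoint, its indicator vectors satisfy $\mathbf{1}_S=\sum_j\mathbf{1}_{S_j}$; this identity is the engine of the whole reduction, and it immediately gives $\mathcal{G}_t(x)=\emptyset$ if and only if $\mathcal{F}_t(x)=\emptyset$.

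The technical heart — and the step I expect to be the main obstacle — is to transfer \emph{strong} balancedness (positive weights on every member) between $\mathcal{F}_t(x)$ and $\mathcal{G}_t(x)$, since the two families contain different coalitions. To pass from a balanced weight system on $\mathcal{G}_t(x)$ to one on $\mathcal{F}_t(x)$, I would delete the inessential coalitions one at a time: replacing the term $\lambda_S\mathbf{1}_S$ of an inessential $S$ by $\lambda_S\sum_j\mathbf{1}_{S_j}$ over its essential partition leaves every balancing equation $\sum_{S\ni i}\lambda_S=1$ intact (because $\mathbf{1}_S=\sum_j\mathbf{1}_{S_j}$) and keeps all essential weights positive, ending with a balanced system supported on $\mathcal{F}_t(x)$. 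For the reverse implication I would start from a balanced system on $\mathcal{F}_t(x)$ and reintroduce each inessential $S\in\mathcal{G}_t(x)$ with a small weight $\varepsilon>0$, compensating by subtracting $\varepsilon$ from the weight of each coalition in the essential partition of $S$; choosing the finitely many $\varepsilon$'s small enough preserves positivity, so $\mathcal{G}_t(x)$ is strongly balanced. This bookkeeping, rather than any deep idea, is where care is required.

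Combining the level-by-level equivalence with Kohlberg's criterion then yields the stated characterization. Alternatively, one may phrase the argument as first invoking the theorem of \citet{Huberman1980} that the nucleolus of a balanced game is unaffected by its inessential coalitions, so that the nucleolus coincides with the lexicographic maximizer of the ordered satisfaction vector over the core restricted to $\mathcal{E}(v)$, and then applying Kohlberg's criterion to this reduced family; both routes rest on the same domination inequality.
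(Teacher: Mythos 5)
Your proposal is correct and follows essentially the route the paper intends: the paper states this lemma without proof, presenting it as an immediate consequence of \cite{Kohlberg1971} together with \cite{Huberman1980}'s observation that inessential coalitions can be ignored for the nucleolus of a balanced game. Your level-by-level equivalence of $\mathcal{G}_t(x)$ and $\mathcal{F}_t(x)$ --- the domination inequality $x(S)-v(S)\geq x(S_l)-v(S_l)$ for core allocations, the identity $\mathbf{1}_S=\sum_j\mathbf{1}_{S_j}$, and the two-directional transfer of strictly positive balancing weights --- correctly supplies the details that the paper delegates to those citations.
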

We use this criterion in two balanced games to check whether the tau-value is the nucleolus.
The first one is the veto-controlled monotonic, hence PMAS-admissible game of Example~\ref{exam:4player-veto}.
The second one is a $(2+2)$-player assignment game with no PMAS as discussed in Case 2 of Proposition~\ref{prop:assign-2x2}.

\begin{example}[PMAS-admissible game where the tau-value and the nucleolus are different]
\label{exam:4player-veto-tau}

Consider the game from Example~\ref{exam:4player-veto} with player set $N=\{1,2,3,4\}$ and coalitional function $v(12)=2$, $v(13)=v(123)=3$, $v(14)=v(124)=4$, $v(134)=5$, $v(N)=8$, and $v(T)=0$ for any other coalition $T\subseteq N$.
It is easily computed that the upper vector is $M(v)=[8,3,4,5]$, the lower vector is $m(v)=[0,0,0,0]$, so the efficiency scalar is $\kappa=8/20=2/5$.
Thus, the tau-value is $\tau(v)=[16/5,6/5,8/5,10/5]$.
The payoffs and the satisfactions for the essential coalitions are listed, respectively, in columns $\tau(S)$ and $\tau(S)-v(S)$ of the table below.
Since all satisfactions are positive (except for the grand coalition, of course, as it must always be 0), the tau-value is in the core (in fact, in its relative interior). 
For the smallest satisfaction $t=6/5$ we have $\mathcal{F}_{6/5}(\tau)=\big\{ \{2\}, \{1,4\} \big\}$.
It is not a balanced family (it does not even cover player 3), thus, the tau-value cannot be the nucleolus.  
$$
\setlength{\extrarowheight}{1pt}
\begin{array}{|cccc||c||cc||cc|}
\multicolumn{4}{|c||}{S} & v(S) & \tau(S) & \tau(S)-v(S) & \eta(S) & \eta(S)-v(S) \\
\hline
\hline
1 & . & . & . & 0 & 16/5 & 16/5 & 21/6 & 21/6 \\
. & 2 & . & . & 0 &  6/5 &  6/5 &  8/6 &  8/6 \\
. & . & 3 & . & 0 &  8/5 &  8/5 &  8/6 &  8/6 \\
. & . & . & 4 & 0 & 10/5 & 10/5 & 11/6 & 11/6 \\
\hline
1 & 2 & . & . & 2 & 22/5 & 12/5 & 29/6 & 17/6 \\
1 & . & 3 & . & 3 & 24/5 &  9/5 & 29/6 & 11/6 \\
1 & . & . & 4 & 4 & 26/5 &  6/5 & 32/6 &  8/6 \\
1 & . & 3 & 4 & 5 & 34/5 &  9/5 & 40/6 & 10/6 \\
\hline
1 & 2 & 3 & 4 & 8 & 8 & 0 & 8 & 0 \\
\hline
\end{array}
$$
Indeed, if we compute the nucleolus (for example, with the specialized method of \cite{Bahel2019}), we get $\eta(v)=[21/6,8/6,8/6,11/6]$.
The payoffs and the satisfactions for the essential coalitions are given, respectively, in columns $\eta(S)$ and $\eta(S)-v(S)$ of the table above.
For the smallest satisfaction $t=8/6$, the family $\mathcal{F}_{8/6}(\eta)=\big\{ \{2\}, \{3\}, \{1,4\} \big\}$ is a partition of $N$, hence it is balanced with weights 1.
For the second smallest satisfaction $t=10/6$, the family $\mathcal{F}_{10/6}(\eta)=\big\{ \{2\}, \{3\}, \{1,4\}, \{1,3,4\} \big\}$ is also balanced, as it is the union of partitions.
The same holds for the family $\mathcal{F}_{11/6}(\eta)=\big\{ \{2\}, \{3\}, \{1,4\}, \{1,3,4\}, \{4\}, \{1,3\} \big\}$ corresponding to the third smallest satisfaction $t=11/6$. And so on, for any higher values of $t$.
Therefore, by the Kohlberg-criterion, the core allocation $\eta(v)$ is indeed the nucleolus in this game. 
\examend
\end{example}

\begin{example}[PMAS-inadmissible assignment game where the tau-value and the nucleolus are different]
\label{exam:4player-assign-tau}

Consider the assignment game with player set $\{1,2\}\cup\{3,4\}$ induced by the following matrix with $\Gamma$-shaped support whose corner is not dominant (although the largest entry):
$\begin{array}{c|cc|}
 & 3 & 4 \\
\hline
1 & 6 & 3  \\
2 & 5 & 0  \\
\hline
\end{array}$.

The coalitional function is $w(13)=w(134)=w(123)=6$, $w(14)=w(124)=3$, $w(23)=w(234)=5$, $w(N)=8$, and $w(T)=0$ for any other coalition $T\subseteq N$.
It is easily computed that the upper vector is $M(w)=[3,2,5,2]$, the lower vector is $m(w)=[1,0,3,0]$, so the efficiency scalar is $\kappa=(8-4)/(12-4)=1/2$.
Thus, the tau-value is $\tau(w)=[2,1,4,1]$.
The payoffs and the satisfactions for the essential coalitions are listed, respectively, in columns $\tau(S)$ and $\tau(S)-w(S)$ of the table below.
For the smallest satisfaction $t=0$ we have $\mathcal{F}_{0}(\tau)=\big\{ \{1,4\}, \{2,3\}, \{1,3\} \big\}$.
It is not a balanced family (although it contains a partition), thus, the tau-value cannot be the nucleolus.  
$$
\setlength{\extrarowheight}{1pt}
\begin{array}{|cccc||c||cc||cc|}
\multicolumn{4}{|c||}{S} & w(S) & \tau(S) & \tau(S)-w(S) & \eta(S) & \eta(S)-w(S) \\
\hline
\hline
1 & . & . & . & 0 & 2 & 2 &  7/3 &  7/3 \\
. & 2 & . & . & 0 & 1 & 1 &  2/3 &  2/3 \\
. & . & 3 & . & 0 & 4 & 4 & 13/3 & 13/3 \\
. & . & . & 4 & 0 & 1 & 1 &  2/3 &  2/3 \\
\hline
1 & . & 3 & . & 6 & 6 & 0 & 20/3 &  2/3 \\
1 & . & . & 4 & 3 & 3 & 0 &    3 &    0 \\
. & 2 & 3 & . & 5 & 5 & 0 &    5 &    0 \\
\hline
1 & 2 & 3 & 4 & 8 & 8 & 0 & 8 & 0 \\
\hline
\end{array}
$$
Indeed, if we compute the nucleolus (for example, with the specialized algorithm of \cite{SolymosiRaghavan1994}), we get $\eta(w)=[7/3,2/3,13/3,2/3]$.
The payoffs and the satisfactions for the essential coalitions are given, respectively, in columns $\eta(S)$ and $\eta(S)-w(S)$ of the table above.
For the smallest satisfaction $t=0$, the family $\mathcal{F}_{0}(\eta)=\big\{ \{1,4\}, \{2,3\}\big\}$ is a partition of $N$, hence it is balanced with weights 1.
For the second smallest satisfaction $t=2/3$, the family $\mathcal{F}_{2/3}(\eta)=\big\{ \{1,4\}, \{2,3\}, \{2\}, \{4\}, \{1,3\} \big\}$ is also balanced, as it is the union of partitions.
The same holds for the families correspond to any higher value of $t$.
Therefore, by the Kohlberg-criterion, the core allocation $\eta(w)$ is indeed the nucleolus in this game. 
\examend
\end{example}

We show that if the assignment game is PMAS-admissible then the tau-value and the nucleolus coincide. 

\begin{theorem}
\label{theo:tau=nucleolus}
In PMAS-admissible assignment games the tau-value and the nucleolus coincide.
\end{theorem}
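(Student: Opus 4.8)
The plan is to reduce the claim to the three ``elementary'' building blocks identified in Theorem~\ref{theo:assign-PMAS-char} and to verify the coincidence of the two solutions on each of them by direct computation. First I would establish that both solution concepts respect the block-diagonal decomposition of a PMAS-admissible assignment game. For the nucleolus this is exactly the decomposability of the nucleolus on totally balanced games already invoked after Proposition~\ref{prop:assignment-convex}. For the tau-value I would use the result of \cite{NunezRafels2002} that in any assignment game the tau-value is the midpoint of the row-optimal and column-optimal core allocations, together with the fact recalled in Section~\ref{prelims} that in a composition the core is the Cartesian product of the component cores; since the side-optimal allocations are defined coordinatewise, they decompose, and hence so does their midpoint. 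Consequently both $\tau(w)$ and the nucleolus $\eta(w)$ restrict, block by block, to the corresponding solution of each component subgame, and it suffices to prove $\tau=\eta$ for an assignment game that is a single block of type (i), (ii), or (iii).

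Since types (i) and (ii) are transposes of one another, I would treat a single-row block and a $\Gamma$-shaped block with dominant corner, the single-column case being symmetric. In both cases only two players carry positive payoff in the core: the row player $i_1$ and the best matched column player $j_1$ in the single-row case, and the corner pair $(i_1,j_1)$ in the $\Gamma$-shaped case. Thus the core is a one-dimensional segment, which I would parametrize by $u_1=x_{i_1}$. Writing $a_{11}$ for the relevant maximal entry (the top entry in the single-row case, the dominant corner in the $\Gamma$-shaped case), $\alpha$ for the largest competing entry in the row of $i_1$, and $\beta$ for the largest competing entry in the column of $j_1$ (with $\beta=0$ in the single-row case), the core conditions collapse to $u_1+x_{j_1}=a_{11}$, all other coordinates $0$, and $\alpha\le u_1\le a_{11}-\beta$. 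By the midpoint formula this gives $\tau_{i_1}=\tfrac12(\alpha+a_{11}-\beta)$.

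Next I would compute the nucleolus on the same segment using Lemma~\ref{lemma:Kohlberg}. The only essential coalitions are the singletons and the positive mixed pairs, and along the segment every satisfaction is either constant (the corner pair $\{i_1,j_1\}$ and the singletons of the null-valued players have satisfaction identically $0$) or strictly monotone in $u_1$: the satisfactions of the pairs $\{i_1,j_l\}$ and of $\{i_1\}$ increase, while those of the pairs $\{i_k,j_1\}$ and of $\{j_1\}$ decrease. Among the monotone ones the two smallest are $u_1-\alpha$ and $(a_{11}-u_1)-\beta$, coming respectively from a tightest coalition on the $i_1$-side and on the $j_1$-side (a competing mixed pair, or a singleton when the corresponding competing entry is absent). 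Maximizing the smaller of these two oppositely monotone quantities forces $u_1-\alpha=(a_{11}-u_1)-\beta$, i.e.\ exactly the midpoint $u_1=\tfrac12(\alpha+a_{11}-\beta)$. I would then confirm via Lemma~\ref{lemma:Kohlberg} that this allocation is the nucleolus: at the lowest threshold the zero-satisfaction coalitions form a partition of $N$ (the corner pair together with the remaining singletons), and at the next threshold the tied pairs $\{i_1,j_l\}$ and $\{i_k,j_1\}$ can be combined with singletons into a balanced family with positive weights. This yields $\eta_{i_1}=\tau_{i_1}$, and since the remaining coordinates are $0$ for both, $\eta=\tau$ on each block.

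The main obstacle I anticipate is the nucleolus computation on the $\Gamma$-shaped block: one must check that no essential coalition other than the two competing mixed pairs becomes binding, handle possible ties among the entries $a_{1l}$ and $a_{k1}$, and verify the balancedness required by Lemma~\ref{lemma:Kohlberg} at every threshold level—in particular that the identically-zero satisfactions, which are the globally smallest, form a partition and hence do not obstruct optimality while the free parameter is pinned down by the next level. The decomposition step, by contrast, is routine once the midpoint characterization of the tau-value and the decomposability of the nucleolus are in hand.
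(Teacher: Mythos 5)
Your proposal is correct and follows essentially the same route as the paper: reduce to the block-diagonal components via Theorem~\ref{theo:assign-PMAS-char}, use the \cite{NunezRafels2002} midpoint characterization of the tau-value together with the Cartesian-product structure of the core to decompose both solutions, and then verify on each elementary block (single row/column, $\Gamma$-shaped with dominant corner) that the midpoint of the one-dimensional core segment satisfies the Kohlberg criterion of Lemma~\ref{lemma:Kohlberg}. The only cosmetic differences are that you locate the nucleolus by equating the two smallest oppositely monotone satisfactions before invoking Kohlberg, and you handle the composite game by appealing to decomposability of the nucleolus, whereas the paper checks balancedness of the unions of the component families directly.
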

\begin{proof}
First we determine the tau-value of a PMAS-admissible assignment game, then we apply the specialized version of the  Kohlberg-criterion to show that it is the nucleolus.
Let assignment game $(I\cup J,w)$ be PMAS-admissible.
By Theorem~\ref{theo:assign-PMAS-char}, it is induced by a block-diagonal nonnegative matrix $A$ with three possible types of blocks.

As proved by \cite{NunezRafels2002}, the tau-value in an assignment game is the simple average of the row-optimal and the column-optimal core extreme points.
Since the core of a composite totally balanced game is always the Cartesian product of the cores of the component balanced games, it is easily seen that the row-optimal and the column-optimal core extreme points of our assignment game $w$ are, respectively, the Cartesian products of the row-optimal and the column-optimal core extreme points of the component assignment games.

Let us see these special core vertices in the three special types of assignment games appearing in Theorem~\ref{theo:assign-PMAS-char}. For simplicity, we use as few generic indices as necessary.
\begin{enumerate}
\item Assignment game $w$ with a single row player induced by a row vector.
Recall from the proof of Proposition~\ref{prop:assign-veto} that for the core only the two highest surplusses matter, any other column player receives fixed zero payoff over the core.
Thus, we assume that $I=\{1\}$, $J=\{2,3,4\}$, and $w$ is induced by row vector $[a, b, c]$ where $a \geq b \geq c >0$.
The row-optimal vertex of the core is $(\overline{x}_1=a,\underline{x}_2=0,\underline{x}_3=0,\underline{x}_4=0)$.
The column-optimal vertex is $(\underline{x}_1=b,\overline{x}_2=a-b,\overline{x}_3=0,\overline{x}_4=0)$.
Thus, the tau-value is $\tau(w)=(\tau_1=\frac{a+b}{2},\tau_2=\frac{a-b}{2},\tau_3=0,\tau_4=0)$.
The satisfactions at the tau-value are the payoffs for the single-player coalitions and $f(\{1,2\},\tau)=0$, $f(\{1,3\},\tau)=\frac{a-b}{2}$, $f(\{1,4\},\tau)=\frac{a-b}{2}+b-c$ for the other essential coalitions.
The lowest satisfaction is $t=0$, the family $\mathcal{F}_{0}(\tau)=\big\{ \{1,2\}, \{3\}, \{4\} \big\}$ is balanced.
The second lowest satisfaction is $t=\frac{a-b}{2}$, the family $\mathcal{F}_{\frac{a-b}{2}}(\tau)=\big\{ \{1,2\}, \{3\}, \{4\}, \{2\}, \{1,3\} \big\}$ is also balanced, as it is the union of partitions.
Since already this family is rich enough to uniquely determine the core allocation which could provide these two lowest satisfaction levels, by Lemma~\ref{lemma:Kohlberg}, the tau-value is the nucleolus.
\item The coincidence of the tau-value and the nucleolus for an assignment game with a single column player induced by a column vector is seen in the same way.
\item Assignment game $w$ with dominant veto mixed pair. 
Recall from the proof of Proposition~\ref{prop:assign-Gamma-shaped} that for the core only the two highest surplusses in the row and the column of the corner matter, any other row or column player receives fixed zero payoff over the core.
Thus, we assume that $I=\{1,2,3\}$, $J=\{4,5,6\}$, and $w$ is induced by the matrix
$
\begin{array}{c|ccc|}
  & 4 & 5 & 6 \\
\hline
1 & a & b & d \\
2 & c & 0 & 0 \\
3 & f & 0 & 0 \\
\hline
\end{array}
$
where $a \geq b+c$, $b \geq d >0$, and $c \geq f >0$.
The row-optimal vertex of the core is $(\overline{x}_1=a-c,\overline{x}_2=0,\overline{x}_3=0;\, \underline{x}_4=c,\underline{x}_5=0,\underline{x}_6=0)$.
The column-optimal vertex is $(\underline{x}_1=b,\underline{x}_2=0,\underline{x}_3=0;\, \overline{x}_4=a-b,\overline{x}_5=0,\overline{x}_6=0)$.
Thus, the tau-value is $\tau(w)=(\tau_1=\frac{a+b-c}{2},\tau_2=0,\tau_3=0;\, \tau_4=\frac{a-b+c}{2},\tau_5=0,\tau_6=0)$.
The satisfactions at the tau-value are the payoffs for the single-player coalitions and $f(\{1,4\},\tau)=0$, $f(\{1,5\},\tau)=f(\{2,4\},\tau)=\frac{a-b-c}{2}$, $f(\{1,6\},\tau)=\frac{a-b-c}{2}+b-d$, $f(\{3,4\},\tau)=\frac{a-b-c}{2}+c-f$ for the other essential coalitions.
The lowest satisfaction is $t=0$, the family $\mathcal{F}_{0}(\tau)=\big\{ \{1,4\}, \{2\}, \{3\}, \{5\}, \{6\} \big\}$ is balanced.
The second lowest satisfaction is $t=\frac{a-b-c}{2}$, the family $\mathcal{F}_{\frac{a-b-c}{2}}(w)=\mathcal{F}_{0}(\tau)\cup \big\{ \{1,5\}, \{2,4\} \big\}$ is also balanced, as it is the union of partitions.
Since already this family is rich enough to uniquely determine the core allocation which could provide these two lowest satisfaction levels, by Lemma~\ref{lemma:Kohlberg}, the tau-value is the nucleolus.
\end{enumerate}
Notice that in each of these special assignment games the lowest satisfaction at the tau-value is $t=0$, that is achieved by a partition composed of the dominant mixed pair and the singleton coalitions of the other players.
Then it is easily seen that in a composition of these special assignment games, the family corresponding to any positive satisfaction level is the union of the respective families in the component games, each of which is balanced on its own player set. Clearly, the union of these balanced families is balanced on the player set of the composite game.
Therefore, by Lemma~\ref{lemma:Kohlberg}, also in the composite game the tau-value coincides with the nucleolus.
\end{proof}

The decomposibility of both the nucleolus and the tau-value on the class of assignment games are key elements of the above proof. 
It could be easily seen that the decomposibility of the nucleolus extends to the class of totally balanced games.
The following example shows that this is not the case for the tau-value, as it fails to be decomposable on  convex games, an important subclass of totally balanced games. 
\begin{example}[The tau-value is not decomposable on the class of convex games]
\label{exam-tau-nondecomp}
Consider the following two games 
$$
\begin{array}{c||ccc|ccc|c}
S      & 1 & 2 & 3 & 12 & 13 & 23 & 123 \\
\hline 
v_1(S) & 0 & 0 & 0 &  1 &  2 &  3 &   5 \\
\end{array}
\qquad\qquad
\begin{array}{c||cc|c}
T      & 4 & 5 & 45 \\
\hline 
v_2(T) & 0 & 0 &  3 \\
\end{array}.
$$
Clearly, both  $(N_1,v_1)$ and $(N_2,v_2)$ are convex games, hence their composite 5-player game, $(N,v)$ is also a convex game.
The following three statements, which help to justify our computation, are easily seen:
For convex games, 
\begin{itemize}
\item the upper vector of the composite game is the Cartesian product of the upper vectors of the component games (in fact, this holds for all games);
\item the lower vector of the composite game is the Cartesian product of the lower vectors of the component games (in fact, this holds for a larger class games);
\item the lower vector is the vector of the individual values (in fact, this holds for a larger class games). 
\end{itemize}
We get that the upper vector, the lower vector, and the tau-value for the above games $(N_1,v_1)$ and $(N_2,v_2)$ are
$$
\begin{array}{c||ccc|}
i\in N_1 & 1 & 2 & 3 \\
\hline 
M_i(v_1) & 2 & 3 & 4 \\
\hline 
m_i(v_1) & 0 & 0 & 0 \\
\hline
\hline 
\tau_i(v_1) & 10/9 & 15/9 & 20/9 \\
\end{array}
\qquad\qquad
\begin{array}{c||ccc|}
j\in N_2 & 4 & 5 \\
\hline 
M_j(v_2) & 3 & 3 \\
\hline 
m_j(v_2) & 0 & 0 \\
\hline
\hline 
\tau_j(v_2) & 3/2 & 3/2 \\
\end{array}.
$$ 
For the composite game $(N,v)$ we get 
$$
\begin{array}{c||ccccc|}
k\in N & 1 & 2 & 3 & 4 & 5 \\
\hline 
M_k(v) & 2 & 3 & 4 & 3 & 3 \\
\hline 
m_k(v) & 0 & 0 & 0 & 0 & 0 \\
\hline
\hline 
\tau_k(v) & 16/15 & 24/15 & 32/15 & 24/15 & 24/15 \\
\end{array},
$$ 
and observe that $\tau(v)\neq\tau(v_1)\times\tau(v_2)$.

In contrast, for the nucleolus, we have 
$\eta(v) = [1,\, 3/2,\, 5/2,\, 3/2,\, 3/2] = [1,\, 3/2,\, 5/2] \times [3/2,\, 3/2] = \eta(v_1)\times\eta(v_2)$.  
\examend
\end{example}

\section{Conclusion}\label{conlusion}

We have characterized the assignment games which admit a population monotonic allocation scheme in terms of efficiently verifiable properties of the pairwise surplus matrix that induces the assignment game.
We have proved that an assignment game is PMAS-admissible if and only if the underlying nonnegative matrix, up to row and column permutations, is a block-diagonal matrix where each block is either a single row vector, or a single column vector, or has a $\Gamma$-shaped support whose corner element itself dominates the value of any other assignment in the block.
Translated in game theoretic terms it means that an assignment game is PMAS-admissible if and only if it contains a veto (row or column) player or a dominant veto mixed pair or is composed of from these two types of special assignment games.

We have also shown that PMAS-admissibility of the assignment game implies that all core allocations can be extended to a PMAS, similarly to convex games, but unlike in arbitrary veto-controlled monotonic games where some core allocations might not be PMAS-extendable.
Finally, we have proved that in a PMAS-admissible assignment game, the nucleolus and the tau-value coincide.

To facilitate our discussions, we have made some (to best of our knowledge new) useful general observations. 
We have shown, for example, that population monotonic allocation schemes are independent of inessential coalitions, and are decomposable on the class of totally balanced games, exactly like the core and the nucleolus, but unlike the tau-value that, as we have illustrated, is not decomposable on the class of convex games, hence cannot be independent of inessential coalitions either.
In light of this difference, the proven equivalence of the nucleolus and the tau-value on the class of PMAS-admissible assignment games seems even more interesting.

\bibliographystyle{te}
\bibliography{Assignment-PMAS}
\end{document}